\Crefname{theorem}{Theorem}{Theorems}
\newcommand{\inlinenorm}[1]{\|#1\|}
\newcommand{\TV}[1]{\norm{#1}_{TV}}
\newcommand{\spann}[1]{\textnormal{span}\left\{#1\right\}}
\newcommand{\R}{\mathbb{R}}
\newcommand{\D}[1]{{\mathcal{D}}_{#1}}
\newcommand{\A}{{A}}
\newcommand{\M}{{M}}
\newcommand{\x}{{x}}
\newcommand{\y}{{y}}
\newcommand{\U}{{U}}
\newcommand{\V}{{V}}
\newcommand{\FNorm}[1]{\norm{#1}_F}
\newcommand{\FKV}{\textsc{FKV }}
\DeclareFontFamily{U}{lasy}{}
\DeclareFontShape{U}{lasy}{m}{n}{
  <-11> lasy5
  <11-13> lasy6
  <13-15> lasy7
  <15-17> lasy8
  <17-19> lasy9
  <19-> lasy10
}{}
\newcounter{sarrow}
\newcommand\xrsquigarrow[1]{%
\stepcounter{sarrow}%
\mathrel{\begin{tikzpicture}[baseline= {( $ (current bounding box.south) + (0,+2ex) $ )}]
\node[inner sep=.5ex] (\thesarrow) {$\scriptstyle #1$};
\path[draw,<-,decorate,
  decoration={zigzag,amplitude=0.7pt,segment length=1.2mm,pre=lineto,pre length=4pt}]
    (\thesarrow.north west) -- (\thesarrow.north east);
\end{tikzpicture}}%
}
\title{A Quantum-inspired Classical Algorithm for Separable Non-negative Matrix Factorization}
\titlerunning{}
\author{Zhihuai Chen$^{1,2}$  \and Yinan Li$^{3}$ \and Xiaoming Sun$^{1,2}$ \and Pei Yuan$^{1,2}$ \and Jialin Zhang$^{1,2}$}
\authorrunning{Zhihuai Chen  \and Yinan Li \and Xiaoming Sun \and Pei Yuan \and Jialin Zhang}
\institute{$^1$CAS Key Lab of Network Data Science and Technology, Institute of Computing Technology, Chinese Academy of Sciences, 100190, Beijing, China \\
$^2$University of Chinese Academy of Sciences, 100049, Beijing, China\\
$^3$Centrum Wiskunde \& Informatica and QuSoft, Science Park 123, 1098XG Amsterdam, Netherlands\\
\path|{chenzhihuai,sunxiaoming,yuanpei}@ict.ac.cn, yinan.li@cwi.nl|
}
\begin{document}

\maketitle
\thispagestyle{empty}
\begin{abstract}\normalsize
    Non-negative Matrix Factorization (NMF) asks to decompose a (entry-wise) non-negative matrix into the product of two smaller-sized nonnegative matrices, which has been shown intractable in general.
    In order to overcome this issue, separability assumption is introduced which assumes all data points are in a conical hull. This assumption makes NMF tractable and is widely used in text analysis and image processing, but still impractical for huge-scale datasets.
    In this paper, inspired by recent development on dequantizing techniques, we propose a new classical algorithm for separable NMF problem. Our new algorithm runs in polynomial time in the rank and logarithmic in the size of input matrices, which achieves an exponential speedup in the low-rank setting.
\end{abstract}


\section{Introduction}
\label{sec:intro}
Non-negative Matrix Factorization (NMF) aims to approximate a non-negative data matrix $\A\in \R_{\ge 0}^{m\times n}$ by the product of two non-negative low rank factors, \emph{i.e.}, $\A \approx WH^T$,
where $W\in\mathbb{R}_{\ge 0}^{m\times k}$ is called basis matrix, $H\in\mathbb{R}_{\ge 0}^{n\times k}$ is called encoding matrix and $k\ll \min\{m, n\}$.
In many applications, an NMF often results in more natural and interpretable part-based decomposition of data \cite{lee1999learning}. Therefore, NMF has been widely used in a number of practical applications, such as topic modeling in text, signal separation, social network, collaborative filtering, dimension reduction, sparse coding, feature selection and hyperspectral image analysis. 
Since computing an NMF is NP-hard \cite{vavasis2009complexity}, a series of heuristic algorithms have been proposed \cite{lee2001algorithms,lin2007projected,hsieh2011fast,kim2008toward,ding2010convex,guan2012mahnmf}. All of the heuristic algorithms aim to minimize the reconstruction error, the formula which is a non-convex program and lack optimality guarantee:

\[
\min_{W\in\R_{\ge 0}^{m\times k}H\in\R_{\ge0}^{n \times k}}\FNorm{\A-WH^T}.
\]

A natural assumption on the data called \emph{separability assumption}, was observed in \cite{donoho2004does} . From a geometry perspective, the separable assumption means that all rows of $\A$ reside in a cone generated by a rather smaller number of rows. In particular, these generators are called anchors of $\A$.
To solve the Separable Non-Negative Matrix Factorizations (SNMF), it is sufficient to identify the anchors in the input matrices, which can be solved in polynomial time \cite{arora2012computing,arora2012learning,gillis2014fast,esser2012convex,elhamifar2012see,zhou2013divide,zhou2014divide}.
Separability assumption is favored by various practical applications. For example, in the unmixing task in hyperspectral imaging, separability implies the existence of `pure' pixel \cite{gillis2014fast}. And in the topic detection task, it also means some words are associated with unique topic \cite{hofmann2017probabilistic}.
In huge datasets, it is useful to pick up some representative data points to stand for other points. Such `self-expression' assumption helps to improve the data analysis procedure
\cite{mahoney2009cur,elhamifar2009sparse}.

\subsection{Related work}
 It is natural to assume all the rows of the input $\A$ has unit $\ell_1$-norm, since $\ell_1$-normalization translates the conical hull to convex hull while keeping the anchors unchanged.
 From this perspective, most algorithms essentially identify the extreme points in the convex hull of the ($\ell_1$-normalized) data vectors.
 In \cite{arora2012computing}, the authors use $m$ linear programs in $O(m)$ variables to identify the anchors out of $m$ data points, and it is therefore not suitable for dealing with large-scale real-world problems.
 Furthermore, \cite{recht2012factoring} presents a single LP in $n^2$ variables for SNMF to deal with large-scale problems (but is still impractical for huge-scale problems). 


There is another class of algorithms based on greedy algorithms. The main idea is to opt a data point on the direction where the current residual decreases fast. The algorithms terminate with a sufficiently small error or a large iteration times. For example, Successful Projection Algorithm (SPA) \cite{gillis2014fast} derives from Gram-Schmidt orthogonalization with row or column pivoting. XRAY \cite{kumar2013fast} detects a new anchor referring to the residual of exterior data points and updates the residual matrix by solving a nonnegative least square regression. Both of these two algorithms based on greedy pursuit have smaller time complexity compared with LP-based methods. However, the time complexity is still too large for large-scaled data.

\cite{zhou2013divide,zhou2014divide} utilize a Divide-and-Conquer Anchoring (DCA) framework to tackle the SNMF. Namely, by projecting the data set into several low-dimension subspaces, and each projection can determines a small set of anchors. Moreover, it can be proven that all the $k$ anchors can be identified by $O(k\log k)$ projections.

Recently, a quantum algorithm for SNMF called Quantum Divide-and-Conquer Anchoring algorithm (QDCA), has been presented \cite{du2018quantum}, which uses the quantum technology to speed up the random projection step in \cite{zhou2013divide}.
QDCA implements matrix-vector product (\textit{i.e.}, random projection) via quantum principal component analysis and then a quantum state encoding the projected data points could be prepared efficiently.
Moreover, there are also several papers utilizing dequantizing techniques to solve some low-rank matrix operations, such as recommendation systems \cite{tang2018recommendation} and matrix inversion \cite{gilyen2018quantum,chia2018quantum}. Dequantizing techniques in those algorithms involve two technologies, the Monte-Carlo singular value decomposition and rejection sampling, which could efficiently simulate some special operations on low-rank matrices.

Inspired by QDCA and the dequantizing techniques , we propose a classical randomized algorithm which speeds up the random projection step in \cite{zhou2013divide} and thereby identifies all anchors efficiently. Our algorithm takes time polynomial in rank $k$, condition number $\kappa$ and {\em logarithm} of the size of matrix. When rank $k=O(\log (mn))$, our algorithm achieves exponentially speedup than any other classical algorithms for SNMF.

\subsection{Organizations}
The rest of this paper is organized as follows: In Section \ref{sec:pre}, we introduce notations, models and preliminaries of our algorithm; in Section \ref{sec:fastdca}, we present our algorithm and analyze its correctness and running time; and Section \ref{sec:con} concludes with a discussion of this paper and the future work.

%

\section{Preliminaries}
\label{sec:pre}
\subsection{Notations}
 Let $[n]:=\{1,2,\ldots,n\}$. Let
$\text{span}\{\x_i\in\R^n | i\in[k]\}:=\{\sum_{i=1}^k\alpha_i\x_i | \alpha_i\in\R, i\in[k]\}$ denote the space spanned by $\x_i$ for $i\in[k]$.
 For a matrix $\A\in\R^{m\times n}$, $\A_{(i)}$ and $\A^{(j)}$ denote the $i$th row and the $j$th column of $\A$ for $i\in[m],j\in[n]$, respectively.
 Let $\A_R=[\A^T_{(i_1)},\A^T_{(i_2)},\ldots,\A^T_{(i_r)}]^T$ where $\A\in\R^{m\times n}$ and $R=\{i_1,i_2,\ldots,i_r\}\subseteq [m]$ (without loss of generality, assume $i_1\le i_2\le\cdots\le i_r$).  $\FNorm{\A}$ and $\norm{\A}_2$ refer to Frobenius norm and spectral norm, respectively.
 For a vector $v\in\R^n$, $\norm{v}$ denotes its $\ell_2$-norm.
For two probability distributions $p,q$ (as density functions) over a discrete universe $D$, the \emph{total variation distance} between them is defined as $\norm{p,q}_{TV}:=\frac{1}{2}\sum_{i\in D}|p(i)-q(i)|$.
$\kappa(\A):=\sigma_{\max}/\sigma_{\min}$ denotes the condition number of $\A$, where $\sigma_{\max}$ and $\sigma_{\min}$ are the maximal and minimal {\em non-zero} singular values of $\A$.

\subsection{Sample Model}
\label{sec:model}


In query model, algorithms for SNMF problem require time which is at least linear in the number of nonzero elements of the matrix, since in the worst case, they have to read out all entries. However, we expect our algorithm to be efficient even if the datasets are extremely large.  Considering the QDCA in \cite{du2018quantum}, one of its advantage is that data is prepared in quantum state and can be access via `quantum' way (like sampling). Thus, in quantum algorithm, quantum state is served to represent data implicitly which can be read out by measurement only. In order to avoiding reading the whole matrix, we introduce a new sample model other than the query model based on the idea of quantum state preparation assumption.
\begin{definition}[$\ell_2$-norm Sampling]
Let $\D{v}$ denote the distribution over $[n]$ with density function $\D{v}(i)=v_i^2/\norm{v}^2$ for $v\in\R^n$. A sample from a distribution $\D{v}$ is called a sample from $v$.
\end{definition}
\begin{lemma}[Vector Sample Model]\label{lem:vec_sample}
    There is a data structure storing vector $v\in \R^n$ in $O(n\log n)$ space, and supporting following operations:
    \begin{itemize}
        \item Querying and updating a entry in $O(\log n)$ time;
        \item Sampling from $\D{v}$ in $O(\log n)$ time;
        \item Finding $\norm{v}$ in $O(1)$ time.
    \end{itemize}
\end{lemma}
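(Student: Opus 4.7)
The plan is to implement the vector with a balanced binary tree (essentially a segment tree) of depth $\lceil \log_2 n\rceil$ whose $n$ leaves are indexed by $[n]$. At leaf $i$ we store the entry $v_i$ itself (sign and magnitude), and at every internal node $u$ we store the quantity $s(u):=\sum_{i\in\mathrm{leaves}(u)} v_i^2$, i.e., the sum of squared values in the subtree rooted at $u$. The root therefore carries $s(\mathrm{root})=\|v\|^2$. Storing $O(n)$ real numbers to $O(\log n)$ bits of precision gives the claimed $O(n\log n)$ space bound.

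For the three operations I would argue as follows. Querying or updating entry $v_i$ amounts to walking from the root to leaf $i$ along the path dictated by the binary representation of $i$; on an update we overwrite $v_i$ at the leaf and then walk back to the root, recomputing $s(u)\leftarrow s(\mathrm{left}(u))+s(\mathrm{right}(u))$ at each ancestor. Each of these touches $O(\log n)$ nodes and performs $O(1)$ work per node, so both take $O(\log n)$ time. Reading $\|v\|$ is done by returning $\sqrt{s(\mathrm{root})}$, which is $O(1)$ (and one may cache the square root after each update so this is truly constant time).

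For sampling from $\mathcal{D}_v$, I would perform a top-down random walk: starting at the root, at an internal node $u$ with children $\ell,r$ descend to $\ell$ with probability $s(\ell)/s(u)$ and to $r$ with the complementary probability (if $s(u)=0$ the vector is zero and sampling is undefined). Let $L_i$ be the leaf for index $i$ and let $u_0=\mathrm{root},u_1,\dots,u_d=L_i$ be the path from the root to $L_i$. Because the choices at different levels are independent and telescope, the probability of reaching $L_i$ equals
\[
\prod_{t=0}^{d-1}\frac{s(u_{t+1})}{s(u_t)}=\frac{s(L_i)}{s(\mathrm{root})}=\frac{v_i^2}{\|v\|^2}=\mathcal{D}_v(i),
\]
which is exactly the target distribution. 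The walk visits $O(\log n)$ nodes and does $O(1)$ work at each, giving $O(\log n)$ time per sample.

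The only potentially delicate step is the sampling analysis, but the telescoping above makes it immediate; the main engineering obstacle is really just bookkeeping so that updates, queries, and samples all traverse the tree consistently and that the stored partial sums are maintained correctly after each update. All three bounds claimed in the lemma then follow directly.
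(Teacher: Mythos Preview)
Your proposal is correct and matches the paper's approach exactly: the paper sketches a binary tree whose leaves store $v_i^2$ (together with $\mathrm{sgn}(v_i)$) and whose internal nodes store the sum of their children, and describes sampling by recursing from the root to a child with probability proportional to its weight. Your write-up simply fleshes out the update and query arguments and makes the telescoping probability computation explicit, which the paper leaves implicit in its figure caption.
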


Such a data structure can be easily implemented via Binary Search Tree (BST) (see Figure~\ref{fig:bst}).
\begin{figure}[H]
    \centering
    \begin{tikzpicture} [level distance=0.8cm,
        level 1/.style={sibling distance=3cm},
        level 2/.style={sibling distance=1.5cm}]
        \node {$\norm{v}^2$}
            child {node {$v_1^2+v_2^2$}
                child {node {$v_1^2$} child{node {$\textnormal{sgn}(v_1)$}}}
                child {node {$v_2^2$} child{node {$\textnormal{sgn}(v_2)$}}}
            }
            child {node {$v_3^2+v_4^2$}
                child {node {$v_3^2$} child{node {$\textnormal{sgn}(v_3)$}}}
                child {node {$v_4^2$} child{node {$\textnormal{sgn}(v_4)$}}}
            };
    \end{tikzpicture}
    \caption{Binary search tree for $v=(v_1,v_2,v_3,v_4)^T\in \R^4$. The leaf node stores $v_i^2$ and interior node stores the sum of the children. In order to restore the original vector, we also store the sign of $v_i$ in leaf node. To sampling from $\D{v}$, we can start from top and randomly recurring on a child, with probability proportional to its weight.}\label{fig:bst}
\end{figure}
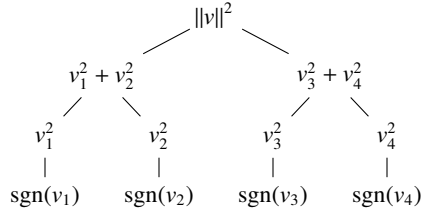

\begin{proposition}[Matrix Sample Model]\label{prop:matrix_model}
    Considering matrix $\A\in \R^{m\times n}$, let $\tilde{\A}$ and $\tilde{\A}'$ be the vector whose entry is $\norm{\A_{(i)}}$ and $\norm{\A^{(j)}}$, respectively.
    There is a data structure storing matrix $\A\in \R^{m\times n}$ in $O(mn)$ space and supporting following operations:
    \begin{itemize}
        \item Querying and updating an entry in $O(\log m + \log n)$ time;
        \item Sampling from $\A_{(i)}$ for any $i\in [m]$ in time $O(\log n)$;
        \item Sampling from $\A^{(j)}$ for any $j\in [n]$ in time $O(\log m)$;
        \item Finding $\FNorm{\A}$, $\norm{\A_{(i)}}$ and $\norm{\A^{(j)}}$ in time $O(1)$;
        \item Sampling $\tilde{\A}$ and $\tilde{\A}'$ in time $O(\log m)$ and $O(\log n)$, respectively.
    \end{itemize}
\end{proposition}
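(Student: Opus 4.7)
The plan is to reduce the proposition to $(m+n+2)$ instances of the vector data structure from Lemma~\ref{lem:vec_sample}, arranged in two layers so that each listed operation reduces to one or two standard BST traversals.

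For the bottom layer, I would maintain a separate BST $T_{(i)}$ storing the row vector $\A_{(i)}\in\R^{n}$ for every $i\in[m]$, and symmetrically a BST $T^{(j)}$ storing the column vector $\A^{(j)}\in\R^{m}$ for every $j\in[n]$. Invoking Lemma~\ref{lem:vec_sample} on each of these immediately yields $O(\log n)$-time sampling from $\D{\A_{(i)}}$, $O(\log m)$-time sampling from $\D{\A^{(j)}}$, and constant-time access to the norms $\norm{\A_{(i)}}$ and $\norm{\A^{(j)}}$. Counting one real-number cell per tree node, the total storage is $\sum_i O(n)+\sum_j O(m)=O(mn)$, matching the claim.

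For the top layer, I would instantiate two further BSTs $T_{\tilde{\A}}$ and $T_{\tilde{\A}'}$, whose $i$-th (resp.\ $j$-th) leaf caches $\norm{\A_{(i)}}^{2}$ (resp.\ $\norm{\A^{(j)}}^{2}$), copied from the roots of the bottom-layer trees. Lemma~\ref{lem:vec_sample} then gives sampling from $\tilde{\A}$ in $O(\log m)$ and from $\tilde{\A}'$ in $O(\log n)$. Moreover, the root of $T_{\tilde{\A}}$ stores $\sum_{i}\norm{\A_{(i)}}^{2}=\FNorm{\A}^{2}$, so $\FNorm{\A}$ is available in $O(1)$; the same quantity also appears at the root of $T_{\tilde{\A}'}$, which keeps the two top-layer trees consistent. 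An entry query for $\A_{ij}$ is a single descent in $T_{(i)}$ (or $T^{(j)}$) and costs $O(\log m+\log n)$.

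The main obstacle, and the only step that needs genuine care, is the update of an entry $\A_{ij}$: four trees must remain mutually consistent within the stated logarithmic time. I would overwrite the leaf of $T_{(i)}$ at position $j$ and propagate the new squared-sum up to its root in $O(\log n)$; do the same in $T^{(j)}$ at position $i$ in $O(\log m)$; then, reading the refreshed root values, update the single affected leaf of $T_{\tilde{\A}}$ (and propagate) in $O(\log m)$, and the single affected leaf of $T_{\tilde{\A}'}$ in $O(\log n)$. Since an individual entry touches exactly one row and one column, the cascade of dependent modifications has constant width and logarithmic depth, giving the claimed $O(\log m+\log n)$ update time. Correctness of the nested sampling operations follows directly from the per-tree guarantee of Lemma~\ref{lem:vec_sample}: a sample from $T_{\tilde{\A}}$ returns $i$ with probability $\norm{\A_{(i)}}^{2}/\FNorm{\A}^{2}=\D{\tilde{\A}}(i)$, and composes cleanly with a subsequent sample from $T_{(i)}$ whenever one later wishes to draw a joint sample from $\A$.
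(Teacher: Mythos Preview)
Your proposal is correct and follows exactly the approach the paper sketches: two arrays of BSTs storing all rows and all columns of $\A$, together with two extra BSTs storing $\tilde{\A}$ and $\tilde{\A}'$, all via Lemma~\ref{lem:vec_sample}. The paper gives only this one-sentence sketch, so your write-up in fact supplies considerably more detail (the update cascade, the space count, and the correctness of the nested sampling) than the original.
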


This data structure can be easily implemented via Lemma~\ref{lem:vec_sample}, we can just use two arrays of BST to store all rows and columns of $\A$ and use two extra BSTs store $\tilde{\A}$ and $\tilde{\A}'$.

\subsection{Low-rank Approximations in Sample Model}


\FKV algorithm is a Monte-Carlo algorithm \cite{frieze2004lowrank} that returns approximate singular vectors of given matrix $\A$ in {\em matrix sample model}. The low-rank approximation of $\A$ can be reconstructed by approximate singular vectors.
The query and sample complexity of FKV algorithm are independent of size of $\A$. FKV algorithm outputs a short `description' of $\hat{\V}$, which is approximate to a right singular vectors $\V$ of matrix $\A$.
Similarly, \FKV algorithm can output a description of approximate left singular vectors $\hat{\U}$ of $\A$ by inputting $\A^T$.
Let \FKV($\A,k,\epsilon,\delta$) denote the \FKV algorithm, where $\A$ is a matrix given by sample model, $k$ is the rank of approximate matrix of $\A$, $\epsilon$ is error parameter, and $\delta$ is the failure probability. The \FKV algorithm is described in Theorem \ref{thm:FKV}.

\begin{theorem}[Low-rank Approximations, \cite{frieze2004lowrank}]\label{thm:FKV}
    Given matrix $\A\in\R^{m\times n}$ in matrix sample model, $k\in \mathbb{N}$ and $\epsilon, \delta\in(0,1)$, \FKV algorithm outputs the description of the approximate right singular vectors $\hat{\V}\in\R^{n\times k}$ in $O(poly(k,1/\epsilon,\log\frac{1}{\delta}))$ samples and queries of $\A$ with probability $1-\delta$, which satisfies
    \[
        \FNorm{\A\hat{\V}\hat{\V}^T - \A}^2 \leq \min_{D:rank(D)\leq k}\FNorm{\A-D}^2 + \epsilon\FNorm{\A}^2.
    \]
\end{theorem}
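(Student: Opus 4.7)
The plan is to prove the theorem via the standard two-stage row-then-column $\ell_2$-sampling scheme of Frieze, Kannan and Vempala. First I would draw $s = \poly(k,1/\epsilon,\log(1/\delta))$ row indices $i_1,\dots,i_s$ from $\tilde{\A}$, the distribution on $[m]$ with mass $\norm{\A_{(i)}}^2/\FNorm{\A}^2$, which the matrix sample model of Proposition~\ref{prop:matrix_model} exposes in $O(\log m)$ time per sample. Form the rescaled row sample $S\in\R^{s\times n}$ with $S_{(t)}=\A_{(i_t)}\cdot \FNorm{\A}/(\sqrt{s}\,\norm{\A_{(i_t)}})$. A direct second-moment calculation gives $\mathbb{E}[S^TS]=\A^T\A$ and $\mathbb{E}\FNorm{S^TS-\A^T\A}^2\le \FNorm{\A}^4/s$, so Markov yields $\FNorm{S^TS-\A^T\A}\le \epsilon\FNorm{\A}^2$ with high probability once $s\gtrsim 1/\epsilon^2$.

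Next I would apply the same idea to $S$, sampling $s$ columns according to column norms to form a tiny matrix $W\in\R^{s\times s}$ on which an exact SVD is computed in time depending only on $s$. Let $\{u^W_t,\sigma^W_t\}_{t=1}^k$ be the top-$k$ left singular pairs of $W$. The key point is that \FKV returns $\hat{\V}$ only \emph{implicitly}: store the coefficient table $\lambda_{t,j}=(u^W_t)_j/\sigma^W_t$ together with the index list $\{i_1,\dots,i_s\}$, so that $\hat{\V}^{(t)}=\sum_{j=1}^s \lambda_{t,j}S_{(j)}$ lies in the row span of $\A$ and can be accessed downstream without ever materializing an $n\times k$ matrix. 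This is crucial: any explicit output would already violate the size-independence claim.

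The analytic core is to show that the projector $\hat{\V}\hat{\V}^T$ recovers, from $\A$'s point of view, essentially the same energy as projection onto $\A$'s top-$k$ right singular subspace. I would first transfer Weyl/Hoffman--Wielandt-style comparisons from $S^TS\approx \A^T\A$ and $W^TW\approx SS^T$ into closeness of the top-$k$ squared singular values, then use the projection identity
\[
\FNorm{\A-\A P}^2=\FNorm{\A}^2-\mathrm{tr}(P\A^T\A P)
\]
to convert the spectral closeness into a Frobenius approximation error. Chaining the two $\epsilon\FNorm{\A}^2$ perturbations and rescaling $\epsilon$ by a constant produces exactly
\[
\FNorm{\A\hat{\V}\hat{\V}^T-\A}^2\le \min_{\mathrm{rank}(D)\le k}\FNorm{\A-D}^2+\epsilon\FNorm{\A}^2.
\]

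The main obstacle, as always with FKV-style bounds, is the absence of any spectral gap assumption: individual columns of $\hat{\V}$ need not be close to any particular right singular vector of $\A$, so Davis--Kahan and direct vector-level perturbation are unavailable. The standard remedy, which I would follow, is to argue \emph{only} about subspaces through the projection-error functional $P\mapsto \FNorm{\A-\A P}^2$, which is continuous on the Grassmannian even when individual eigenvectors are unstable. Propagating both sampling errors while keeping the cumulative failure probability at $\delta$ (union bound plus a correspondingly enlarged $s$) is the most delicate bookkeeping, but introduces no dependence on $m$ or $n$, giving the $O(\poly(k,1/\epsilon,\log(1/\delta)))$ sample-and-query complexity claimed.
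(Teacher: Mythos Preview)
The paper does not prove this theorem at all: it is stated as a citation of \cite{frieze2004lowrank} and used as a black box throughout. So there is no ``paper's own proof'' to compare against.

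That said, your sketch is a faithful outline of the original Frieze--Kannan--Vempala argument: $\ell_2$ row sampling to build $S$ with $\mathbb{E}[S^TS]=\A^T\A$ and bounded variance, a second column-sampling pass to get a constant-size $W$, exact SVD on $W$, and the description of $\hat{\V}$ as short linear combinations of sampled rows. Your emphasis on arguing only through the functional $P\mapsto\FNorm{\A-\A P}^2$ rather than through Davis--Kahan is exactly the right instinct in the gap-free regime. One minor caution: your variance bound $\mathbb{E}\FNorm{S^TS-\A^T\A}^2\le\FNorm{\A}^4/s$ plus Markov gives only constant success probability; to reach failure probability $\delta$ one typically runs $O(\log(1/\delta))$ independent trials and selects by a median-of-means or verification step, which you allude to but should make explicit to justify the $\log(1/\delta)$ in the complexity.
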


Especially, if $\A$ is a matrix with rank $k$ exactly, Theorem~\ref{thm:FKV} also implies an inequality:
$$\FNorm{\A\hat{\V}\hat{\V}^T - \A} \leq \sqrt{\epsilon}\FNorm{\A}.$$

\paragraph{Description of $\hat{\V}$.} Note that \FKV algorithm does not output the approximate right singular vectors $\hat{\V}$ directly since their lengths are linear of $n$.
It returns a description of $\hat{\V}$, which consists of three components:
the row index sets $T:=\{i_t\in[m]|t\in[p]\}$, a vector set $U:=\{u^{(j)}\in\R^{p}|j\in [k]\}$ which are singular vectors of a submatrix sampled from $\A$ , and its corresponding singular values $\Sigma:=\{\sigma^{(j)}|j\in[k]\}$, where $p=O(poly(k,\frac{1}{\epsilon}))$. In fact, $\hat{\V}^{(i)}:=\A_T u^{(i)}/\sigma_i$ for $i\in [k]$. Given a description of $\hat{\V}$, we can {\em sample} from $\hat{\V}^{(i)}$ in time $O(poly(k,\frac{1}{\epsilon}))$ for $i\in[k]$
\cite{tang2018recommendation} and {\em query} its entry in time $O(poly(k,\frac{1}{\epsilon}))$.

\begin{definition}[$\alpha$-orthonormal]
    Given $\alpha>0$, $\hat{\V}\in\R^{n\times k}$ is called $\alpha$-approximately orthonormal if $1-\alpha/k\le\norm{\hat{\V}^{(i)}}^2\le1+\alpha/k$ for $i\in[k]$ and $|\hat{\V}^{(s)}\hat{\V}^{(t)}|\le \alpha/k$ for $s\neq t\in[k]$.
\end{definition}
The next lemma presents some properties of $\alpha$-approximate orthonormal vectors.
\begin{lemma}[Properties of $\alpha$-orthonormal Vectors, \cite{tang2018recommendation}]\label{lem:orthonormal}
    Given a set of  $k$ $\alpha$-approximately orthonormal vectors $\hat{\V}\in\R^{n\times k}$, then there exists a set of $k$ orthonormal vectors $\V\in\R^{n\times k}$ spanning the columns of $\hat{\V}$ such that
    \begin{align}
    & \FNorm{\V-\hat{\V}} \leq \alpha/\sqrt{2} + c_1\alpha^2,\label{eq:orthonormal}\\
    & \FNorm{\Pi_{\hat{\V}} - \hat{\V}\hat{\V}^T} \leq c_2\alpha,\label{eq:app_proj}
    \end{align}
    where $\Pi_{\hat{\V}}:=\V\V^T$ represents the orthonormal projector to image of $\hat{\V}$ and $c_1,c_2>0$ are constants.
\end{lemma}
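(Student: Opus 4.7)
The plan is to construct $\V$ from $\hat{\V}$ via the symmetric (polar) orthogonalization and reduce both estimates to a single perturbation matrix $E := \hat{\V}^T\hat{\V} - I$. The $\alpha$-orthonormality hypothesis bounds every entry of $E$ by $\alpha/k$ in absolute value, so $\FNorm{E}^2 \le k^2(\alpha/k)^2 = \alpha^2$. For $\alpha < k$ the Gram matrix $M := I + E$ is positive definite, so $M^{-1/2}$ is well-defined, and the matrix $\V := \hat{\V} M^{-1/2}$ satisfies $\V^T \V = M^{-1/2} M M^{-1/2} = I$. Since $M^{-1/2}$ is invertible, the columns of $\V$ are orthonormal and span the same subspace as those of $\hat{\V}$; in particular $\Pi_{\hat{\V}} = \V\V^T = \hat{\V} M^{-1} \hat{\V}^T$.

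For the second inequality I would use the SVD $\hat{\V} = U_1 \Sigma W^T$ (so $\Sigma^2$ carries the eigenvalues of $M$) to simplify
\[
\Pi_{\hat{\V}} - \hat{\V}\hat{\V}^T \;=\; \hat{\V}(M^{-1}-I)\hat{\V}^T \;=\; -U_1(\Sigma^2 - I)U_1^T.
\]
Since $U_1$ has orthonormal columns, the Frobenius norm equals $\FNorm{\Sigma^2 - I} = \FNorm{M - I} = \FNorm{E} \le \alpha$, giving the bound with $c_2 = 1$. For the first inequality, the same SVD yields $\V = U_1 W^T$ and hence $\FNorm{\V - \hat{\V}} = \FNorm{I - \Sigma}$. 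Writing $\sigma_i^2 = 1 + \mu_i$ with $\mu_i$ the eigenvalues of the symmetric matrix $E$, the identity $1 - \sigma_i = -\mu_i/(1+\sigma_i)$ together with $\sigma_i \ge \sqrt{1 - \norm{E}_2} \ge \sqrt{1-\alpha}$ gives
\[
\FNorm{I - \Sigma}^2 \;=\; \sum_i \frac{\mu_i^2}{(1+\sigma_i)^2} \;\le\; \frac{\FNorm{E}^2}{(1+\sqrt{1-\alpha})^2},
\]
and the crude overestimate $(1+\sqrt{1-\alpha})^{-2} \le 1/2$ (valid for $\alpha \le 2\sqrt{2}-2$) bounds the right-hand side by $\alpha^2/2$; taking square roots delivers the claimed $\FNorm{\V - \hat{\V}} \le \alpha/\sqrt{2} + c_1 \alpha^2$, where any slack from the overestimate is absorbed into the $c_1 \alpha^2$ term.

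The main obstacle is essentially bookkeeping: both inequalities funnel through the single perturbation $E$ of the Gram matrix, and the second reduces to an exact identity via the SVD. The only nontrivial step is tracking the higher-order contributions in the perturbation analysis precisely enough to land on the clean form $\alpha/\sqrt{2} + c_1\alpha^2$ in the first bound, rather than a messier polynomial in $\alpha$; the leading constant $1/\sqrt{2}$ is simply the loosest usable overestimate of $(1+\sigma_i)^{-1}$ for $\sigma_i$ near $1$.
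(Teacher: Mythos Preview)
The paper does not actually prove this lemma; it is quoted from \cite{tang2018recommendation} and used as a black box, so there is nothing in the present paper to compare your argument against. That said, your proposal is correct and is essentially the standard route: the polar (L\"owdin) orthogonalization $\V=\hat{\V}(\hat{\V}^T\hat{\V})^{-1/2}$ together with the SVD reduction to the single perturbation $E=\hat{\V}^T\hat{\V}-I$ is exactly how results of this type are established in the dequantization literature.

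Two minor remarks. First, the condition you need for $M=I+E$ to be positive definite is $\alpha<1$, not $\alpha<k$: Gershgorin applied to $M$ gives eigenvalues in $[1-\alpha,1+\alpha]$, since each row of $E$ has $k$ entries bounded by $\alpha/k$. Second, your overestimate $(1+\sqrt{1-\alpha})^{-2}\le 1/2$ (valid for $\alpha\le 2\sqrt{2}-2$) already yields $\FNorm{\V-\hat{\V}}\le \alpha/\sqrt{2}$ outright, so the $c_1\alpha^2$ term is not needed in that range; if one wants a bound valid for all $\alpha<1$, a Taylor expansion of $\alpha/(1+\sqrt{1-\alpha})$ around $0$ gives leading term $\alpha/2$ with an $O(\alpha^2)$ remainder, which is how the extra $c_1\alpha^2$ naturally arises. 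Either way your argument lands on the stated form, and the second bound with $c_2=1$ follows exactly from your SVD identity.
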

\begin{lemma}[\cite{frieze2004lowrank}]\label{FKVorth}
The output vectors $\hat{\V}\in\R^{n\times k}$ of $\FKV(\A,k,\epsilon,\delta)$ is $\epsilon k/16$-approximate orthonormal.
\end{lemma}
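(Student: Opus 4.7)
The plan is to analyze the Gram matrix $\hat{\V}^T \hat{\V}$ directly, using the explicit description $\hat{\V}^{(i)} = \A_T^T u^{(i)}/\sigma_i$ spelled out just after Theorem~\ref{thm:FKV}. Every entry can be rewritten as
\[
(\hat{\V}^{(s)})^T \hat{\V}^{(t)} \;=\; \frac{(u^{(s)})^T (\A_T \A_T^T)\, u^{(t)}}{\sigma_s \sigma_t},
\]
so the whole argument reduces to understanding how the symmetric operator $\A_T \A_T^T$ acts on the vectors $u^{(i)}$. The numerator will split into a ``main term'' that yields $\sigma_s^2\,\delta_{st}$ and an ``error term'' controlled by a Frobenius norm bound.

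The key structural fact is that, inside the \FKV algorithm, the $u^{(i)}$ and $\sigma_i$ are the top-$k$ orthonormal left singular vectors and corresponding singular values of a second-level, column-subsampled and appropriately rescaled matrix $W \in \R^{p \times p}$ built from $\A_T$. Thus $W W^T u^{(i)} = \sigma_i^2 u^{(i)}$ holds exactly, while the Frieze--Kannan--Vempala matrix-multiplication sampling lemma delivers a concentration bound of the form $\FNorm{\A_T \A_T^T - W W^T} \le \eta \FNorm{\A_T}^2$ with probability $\ge 1-\delta$, where $\eta$ shrinks polynomially in $1/p$. The \FKV algorithm chooses $p = \poly(k,1/\epsilon,\log(1/\delta))$ large enough that $\eta \FNorm{\A_T}^2 \le (\epsilon/16)\,\sigma_k^2$; this is the very calibration that underlies Theorem~\ref{thm:FKV}. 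Decomposing $(u^{(s)})^T (\A_T \A_T^T) u^{(t)} = (u^{(s)})^T (W W^T) u^{(t)} + e_{st}$, the first summand equals $\sigma_t^2 \delta_{st}$ by orthonormality of the $u^{(i)}$, and $|e_{st}| \le \FNorm{\A_T \A_T^T - W W^T}$. Dividing by $\sigma_s \sigma_t \ge \sigma_k^2$ gives $\bigl|\,\|\hat{\V}^{(s)}\|^2 - 1\bigr| \le \epsilon/16$ when $s = t$ and $|(\hat{\V}^{(s)})^T \hat{\V}^{(t)}| \le \epsilon/16$ when $s \ne t$, which is exactly the required $\epsilon k/16$-approximate orthonormality bound.

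The main obstacle I anticipate is bookkeeping the sampling rescalings. The \FKV algorithm inserts $1/\sqrt{p\,p_i}$ and $1/\sqrt{p\,q_j}$ row and column weights in order to make the relevant sampled products unbiased, so the informal identity $W W^T \approx \A_T \A_T^T$ must be interpreted with these rescalings folded into $\A_T$. Moreover, turning $\eta \FNorm{\A_T}^2$ into a bound relative to $\sigma_k^2$ requires an a priori lower bound $\sigma_k \ge \Omega(\FNorm{\A_T}/\sqrt{k})$ on the retained top-$k$ block, which forces one to verify that the sample size $p$ is indeed large enough relative to $k$ and $1/\epsilon$. Once these normalizations are pinned down, the remainder is a direct and routine appeal to the concentration lemma of \cite{frieze2004lowrank}.
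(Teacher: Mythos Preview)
The paper does not supply a proof of this lemma; it is stated with a citation to \cite{frieze2004lowrank} and used as a black box. Your proposal is therefore not competing against any argument in the present paper but is a reconstruction of the proof in the original \FKV paper, and the Gram-matrix identity $(\hat{\V}^{(s)})^T\hat{\V}^{(t)} = (u^{(s)})^T(\A_T\A_T^T)u^{(t)}/(\sigma_s\sigma_t)$ together with the sampling bound $\FNorm{\A_T\A_T^T - WW^T}\le \eta\FNorm{\A_T}^2$ is indeed the right skeleton.

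One correction to your anticipated obstacle: the lower bound on the retained $\sigma_i$ is not obtained by enlarging the sample size $p$. In \FKV the vectors $u^{(i)}$ that are kept are exactly those left singular vectors of $W$ whose singular values exceed an explicit threshold $\gamma\FNorm{W}$ with $\gamma$ of order $\epsilon/k$; this filter, combined with $\FNorm{W}\approx\FNorm{\A_T}$, is what forces $\sigma_s\sigma_t$ to dominate $|e_{st}|$ by the required factor and yields the $\epsilon/16$ bound after division. Without the threshold, small tail singular values could make $|e_{st}|/(\sigma_s\sigma_t)$ blow up regardless of how large $p$ is. A side effect is that \FKV may return fewer than $k$ vectors, which the present paper quietly accommodates (note the ``$l\le k$'' in the statement of Lemma~\ref{lem:span}).
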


\section{Fast Anchors Seeking Algorithm }
\label{sec:fastdca}
In this section, we present a randomized algorithm for SNMF which is called Fast Anchors Seeking (FAS) Algorithm. Especially, the input $\A\in\R_{\ge 0}^{m\times n}$ of FAS is given by matrix sample model which is realized via a data structure described in Section \ref{sec:pre}.
FAS returns the indices of anchors in time polynomial logarithmic to the size of matrix.

\subsection{Description of Algorithm}

Recall that SNMF aims to factorize $\A= F\A_R$ where $R$ is the index set of anchors. In this paper, an additional constraint is added: the sum of entries in any row of $F$ is 1. Namely, any data point of $\A$ resides in convex hull which is the set of all convex combination of $\A_R$. In fact, normalizing each row of matrix $\A$ by $\ell_1$-norm is valid, since the anchors remain unchanged. Moreover, Instead of storing $\ell_1$-normalized matrix $A$, we can just maintain the $\ell_1$-norms for all rows and columns.

The Quantum Divide-and-Conquer Anchoring (QDCA) is a quantum algorithm for SNMF which achieves exponential speedup than any classical algorithms \cite{du2018quantum}. After projecting any convex hull into an 1-dimensional space, the geometric information is partially preserved. Especially, the anchors in 1-dimensional projected subspace are still anchors in the original space. The main idea of QDCA is quantizing random projection step in DCA.
It decomposes SNMF into several subproblems: projecting $\A$ onto a set of random unit vectors $\{\beta_i\in\mathbb{R}^n\}_{i=1}^{s}$ with $s=O(k\log k)$, \emph{i.e.}, computing $\A\beta_i\in\R^{m}$. Such a matrix-vector product can be efficiently implemented by Quantum Principle Component Analysis (QPCA). And then it returns a $\log m$-qubits quantum state whose amplitudes are proportional to entries of
$\A\beta_i$. Measurement of quantum state outcomes an index $j\in[m]$ which obeys distribution $\D{\A\beta_i}$.
Thus, we can prepare $O(poly\log m)$ copies of quantum states, measure each of them in computational basis and record the most frequent index. By repeating procedure above with $s=O(k\log k)$ times, we could successively identify all anchors with high probability.

As discussed above, the core and most costly procedure is to simulate $\D{\A\beta_i}$. At the first sight, traditional algorithms can not achieve exponential speedup on account of limits of computational model. In QDCA, vectors are encoded into quantum states and we can sample the entries with probability proportional to their magnitudes by measurements. This quantum state preparation overcomes the bottleneck of traditional computational model. Based on divided-and-conquer scheme and sample model (See Section \ref{sec:model}), we present Fast Anchors Seeking (FAS) Algorithm inspired by QDCA. Designing FAS is quite hard and non-trivial although FAS and QDCA have the same scheme.
Indeed, we can simulate $\D{\A\beta_i}$ directly by rejection sampling technology. However, the number of iterations of rejection sampling is unbounded. To overcome this difficulty, we translate matrix $A$ into its approximation $\hat{U}\hat{U}^TA$, where the columns $\hat{\U}\in \mathbb{R}^{m\times k}$ consists of $k$ approximate left singular vectors of matrix $\A$ and $k=rank(\A)$. Next, it is obvious that $\y=\hat{\U}^TA\beta_i \in \mathbb{R}^k$ is a short vector and we can estimate its entries one by one (see Lemma \ref{cor:inner_estimation}) efficiently. Now the problem becomes to simulate $\D{\hat{\U}\y}$ and it can be done by Lemma~\ref{lem:tang} .

Given an error parameter $\epsilon/2$, the method described above will result in $\norm{\A\beta_i-\hat{\U}\hat{U}^TA\beta_i}<\epsilon \FNorm{A}\norm{\beta_i}/2$ via Theorem \ref{thm:FKV}, which implies $\TV{\D{\A\beta_i}-\D{\hat{\U}\hat{U}^T\A\beta_i}}\le \epsilon \FNorm{A}\norm{\beta_i}/\norm{\A\beta_i}$. Namely, the method above introduces an unbounded error in form $\epsilon \FNorm{A}\norm{\beta_i}/\norm{A\beta_i}$ if $\beta_i$ is  arbitrary vector in entire space $\mathbb{R}^n$.
Fortunately, this issue can be solved by generating random vectors $\{\beta_i\}_{i=1}^s$ lying in row space of $A$ instead of those lying in entire space $\mathbb{R}^n$.
To generate uniform random unit vectors on the row space of $\A$, we need to find a basis of row space of $\A$.
If $\V\in \R^{n\times k}$ is a set of orthonormal basis of the row space of $\A$ (the space spanned by the right singular vectors), and $\x_i$ is uniform random unit vector on $\mathbb{S}^{k-1}$, then $\beta = \V\x_i$ is a unit random vector in row space of $\A$.
Moreover, \FKV algorithm will figure out approximate singular vectors $\hat{\V}$ for $\V$,
that can help us make an approximate $\hat{\beta_i}=\hat{\V}\x_i$ for $\beta_i$.
Therefore, we will estimate distribution $\D{\hat{U}\hat{\U}^T\A\hat{\V}\x_i}$ instead of $\D{\A\beta_i}$.
Based on Corollary~\ref{cor:inner_estimation}, $\hat{\U}^T\A\hat{\V}$ can be estimated efficiently. According to Lemma \ref{lem:tang}, $\hat{\U}\y$ can be sampled efficiently, thus we can treat $\tilde{\y}$ as estimation of $\hat{\U}^T\A\hat{\V}\x_i$ (see Figure~\ref{fig:relation}).
\begin{figure*}[!htb]
    \centering
    \[
        \D{\A\beta_i}
        \xleftarrow{\textnormal{(1)}~\beta_i=\V\x_i}
        \D{\A\V\x_i}
        \overset{\textnormal{(2)}~\text{Lemma}~\ref{lem:orthonormal},~\text{Lemma}~\ref{lem:span}}{\xrsquigarrow{\text{span}\{\V^{(i)}|i\in[k]\}=\text{span}\{\hat{\V}^{(i)}|i\in[k]\}=\text{span}\{A_{(i)}|i\in[m]\}}}
        \D{\A\hat{\V}\x_i}
        \overset{\textnormal{(3)}~\text{Theorem}~\ref{thm:FKV}}{\xrsquigarrow{ \hat{U}\hat{U}^TA \approx A }}
        \D{\hat{\U}\hat{\U}^T\A\hat{\V}\x_i}
        \overset{\textnormal{(4)}~\text{Lemma}~\ref{cor:inner_estimation}}{\xrsquigarrow{~\tilde{\M}\approx\hat{\U}^T\A\hat{\V}, \tilde{\y_i}=\tilde{\M}\x_i}}
        \D{\hat{\U}\tilde{\y_i}}
        \overset{\textnormal{(5)}~\text{Lemma~\ref{lem:tang}}}{\xrsquigarrow{\textnormal{rejection sampling}}}
        \mathcal{O}_i
    \]

    \caption{An illustration for how to approximate distribution $\D{\A\beta_i}$.
    $\mathcal{O}_i$ represents the final distribution which approximate $\D{\A\beta_i}$.
        $\leftsquigarrow$ represents `approximate' and $\leftarrow$ represents `equal to' in a sense of total variation distance.
        To prove the upper bound for $\|\D{\A\beta_i}, \mathcal{O}_i\|_{TV}$,
        we introduce several medium distributions $\D{A\hat{V}x_i}$, $\D{\hat{U}\hat{U}^TA\hat{V}\x_i}$ and $\D{\hat{U}y}$.
        From left to right, (1) use a set of right singular vectors $V\in \mathbb{R}^{n\times k}$ to generate the random unit vector lying in the row space of $A$;
        (2) however, since $V$ cannot be gained efficiently,  use FKV algorithm to figure out an approximation $\hat{V}$ given by a `short desription'; (Lemma \ref{lem:orthonormal},~Lemma \ref{lem:span})
        (3) translate $A\hat{V}x_i$ into $\hat{U}\hat{U}^TA\hat{V}x_i$ since $\A\approx \hat{\U}\hat{U^T}\A$ (Theorem \ref{thm:FKV}), where $\hat{U}$ is the approximate left singular vectors generated by FKV given by a `short description'.
        (4) return an estimation $\tilde{\M}$ of $M=\hat{U}^TA\hat{V} \in \mathbb{R}^{k\times k}$ by estimating each entry by Lemma~\ref{cor:inner_estimation} and then approximate $Mx_i$ (denoted as $\tilde{y}_i$);
        (5) finally, the rejection sampling works for $\hat{U}\tilde{y}_i$ by Lemma~\ref{lem:tang} since $\hat{U}$ is approximately orthonormal.
    }
    \label{fig:relation}
\end{figure*}

Once we can simulate distribution $\D{\A\V\x_i}$, we can figure out the index of the largest component of vector $\A\V\x_i$ by picking up $O(poly\log m)$ samples (Theorem~\ref{thm:sample_largest}). Moreover, according to \cite{zhou2013divide}, by repeating this procedure with $O(k\log{k})$ times, we can find all anchors of $\A$ with high probability (For single step of random projection, see Figure~\ref{fig:our_proj}).

\begin{figure}[h]
    \centering
    \includegraphics[width=10cm]{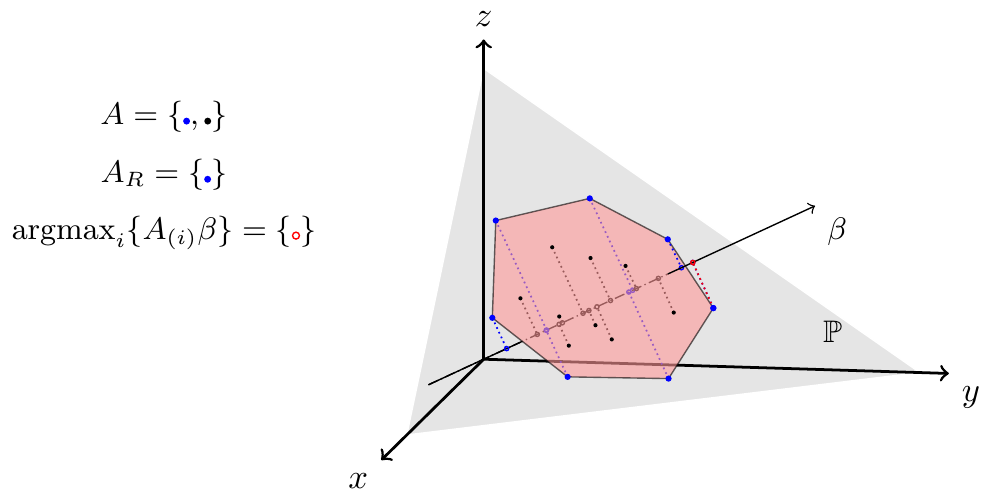}
    \caption{An illustration of finding an anchor of $A$ with rank $k=3$. The 3-dimensional space represents the row space of $A$ and $A$'s data points (blue and black points) lie on the $\ell_1$-normalized plane $\mathbb{P}$.
    The blue points also stand for the anchors of $A$. A random vector $\beta$ is picked up from row space of $A$ and then data points are projected on $\beta$. The anchors of the projected space on $\beta$ are still the anchors of $A$, such that implies that the red point with the maximum absolute projection component on $\beta$ is an anchor of $A$. }
    \label{fig:our_proj}
\end{figure}

\begin{algorithm}[! htb]
    \caption{Fast Anchors Seeking Algorithm}
    \label{alg:fas}
    \textbf{Input}: Separable non-negative matrix $\A\in \R_{\ge 0}^{m\times n}$ in matrix sample model, $k=rank(\A)$, condition number $\kappa$, a constant $\delta\in (0,1)$ and $s=O(k\log k)$.\\
    \textbf{Output}: The index set $R$ of anchors for $\A$.\\
    \begin{algorithmic}[1] 
        \STATE Initialize $R=\emptyset$.
        \STATE Set $\epsilon < 2\sqrt{2\log(4\log^2(m)/\delta)/\log^2(m)}$.
        \STATE Set $\epsilon_{\V}=O\left(\min\left\{\epsilon/\sqrt{k}\kappa,1/k\kappa^2\right\}\right)$, $\delta_\V=1-(1-\delta)^{\frac{1}{4}}$.
        \STATE Run $\FKV(\A, k, \epsilon_\V, \delta_\V)$ and output the description of approximate right singular vectors $\hat{\V}$. \label{ln:fkvv}
        \STATE Set  $\epsilon_{\U}=O\left(\min\left\{\frac{\epsilon}{k},\frac{1}{k\kappa^2}\right\}\right)$, $\delta_\U=1-(1-\delta)^{\frac{1}{4}}$.
        \STATE Run $\FKV(\A^T, k, \epsilon_\U, \delta_\U)$ and output the description of approximate left singular vectors $\hat{\U}$.\label{ln:fkvu}
        \STATE By Lemma \ref{cor:inner_estimation}, estimate $\M:=\hat{\U}^T\A\hat{\V}$ with relative error $\zeta=O\left(\epsilon/k^2\kappa\right)$ and failure probability $\eta=1-(1-\delta)^{\frac{1}{4}}$, and denote the result as $\tilde{\M}$.\label{ln:inner-est}
        \FOR{$i=1$ to $s$}
        \STATE Generate a unit random vector $\x_i\in\R^k$.
        \STATE Directly compute $\tilde{y}_i=\tilde{\M}\x_i$.
        \STATE By rejection sampling (Algorithm \ref{alg:rej}), simulate distribution $\D{\hat{\U}\tilde{y}}$ with failure probability $\gamma=1-(1-\delta)^{\frac{1}{4s}}$ and pick up $O(poly\log m)$ samples. \label{ln:rej}\\
        \COMMENT{Let $\mathcal{O}_i$ denotes the actual distribution which simulates $\D{\hat{\U}\tilde{y}_i}$ }
        \STATE $R\leftarrow R\cup \{ l \}$, where $l$ is the most frequently index appearing in $O(poly\log m)$ samples.
        \ENDFOR
        \STATE Return $R$
    \end{algorithmic}
    \end{algorithm}

\subsection{Analysis}
Now, we propose our main theorem and analyze the correctness and complexity of our algorithm FAS.
\begin{theorem}[Main Result]
    Given separable non-negative matrix $\A\in\R_{\ge 0}^{m\times n}$ in matrix sample model, the rank $k$, condition number $\kappa$ and a constant $\delta\in (0,1)$, Algorithm \ref{alg:fas} returns the indices of anchors with probability at least $1-\delta$ in time
    \[
        O\left(poly\left(k,\kappa,\log \frac{1}{\delta}, \log (mn)\right)\right).
    \]
\end{theorem}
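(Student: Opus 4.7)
The plan is to prove correctness and running time separately and combine via a union bound, following the five-stage chain of \Cref{fig:relation}. For correctness, the central quantity to control is $\TV{\D{\A\beta_i}-\mathcal{O}_i}$, where $\mathcal{O}_i$ is the distribution actually simulated in line~\ref{ln:rej}. I would handle the five transitions one at a time. Stage~(1) is exact by construction of $\beta_i = \V\x_i$. For stage~(2), combining \Cref{lem:orthonormal} and \Cref{FKVorth} yields $\FNorm{\V - \hat{\V}} = O(\epsilon_\V)$, and hence $\norm{\A\V\x_i - \A\hat{\V}\x_i} \le \norm{\A}_2 \cdot O(\epsilon_\V)$. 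To convert this absolute perturbation into TV distance on $\D{\A\V\x_i}$ one uses that $\beta_i$ lies in the row space of $\A$, so $\norm{\A\beta_i} \ge \sigma_{\min}(\A)$; division by $\sigma_{\min}$ introduces the factor of $\kappa$. Stage~(3) uses the exact-rank-$k$ form of \Cref{thm:FKV}, $\FNorm{\A - \hat{\U}\hat{\U}^T\A} \le \sqrt{\epsilon_\U}\FNorm{\A}$, with the same rescaling by $\sigma_{\min}$.

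Stage~(4) approximates $\M = \hat{\U}^T\A\hat{\V}$ entrywise via \Cref{cor:inner_estimation}; with relative error $\zeta$ per entry one gets $\FNorm{\tilde{\M}-\M} \le \zeta k\FNorm{\M}$, hence $\norm{\tilde{\y}_i - \M\x_i} \le \zeta k \FNorm{\M}$, which rescales to TV error $O(\zeta k^2 \kappa)$. Stage~(5) invokes \Cref{lem:tang}: since by \Cref{FKVorth} the columns of $\hat{\U}$ are $\epsilon_\U k/16$-approximately orthonormal, rejection sampling produces $\mathcal{O}_i$ with $\TV{\D{\hat{\U}\tilde{\y}_i}-\mathcal{O}_i} \le \gamma$ in $\text{poly}(k,1/\epsilon_\U,\log(1/\gamma))$ expected trials. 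Choosing the parameters as in lines~2, 3, 5, 7 of \Cref{alg:fas} makes every stage contribute $O(\epsilon)$; a union bound over the four failure events (two \FKV runs, the inner-product estimation, and the rejection sampling) gives $\TV{\D{\A\beta_i} - \mathcal{O}_i} \le \epsilon$ with probability at least $1-\delta$.

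Once $\epsilon$ is small enough, \Cref{thm:sample_largest} (or a direct Chernoff argument) guarantees that $O(\text{poly}\log m)$ draws from $\mathcal{O}_i$ identify the index of the largest coordinate of $\A\beta_i$ with probability at least $1-\delta/s$. Invoking the DCA guarantee of \cite{zhou2013divide} that $s = O(k\log k)$ random projections reveal all $k$ anchors with high probability, and taking a union bound over the $s$ iterations, establishes correctness. For the running time, the two \FKV calls cost $\text{poly}(k,1/\epsilon_\V,\log(1/\delta_\V))$ samples of $\A$, each of cost $O(\log(mn))$ by \Cref{prop:matrix_model}; the $k^2$ inner-product estimates of line~\ref{ln:inner-est} each cost $\text{poly}(k,1/\zeta,\log(1/\eta))$; and each of the $s$ iterations performs $\text{poly}\log m$ rejection samples, each of cost $\text{poly}(k,1/\epsilon_\U,\log(1/\gamma))$. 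Substituting the parameter settings ($\epsilon_\V = O(\min\{\epsilon/\sqrt{k}\kappa, 1/k\kappa^2\})$, $\epsilon_\U=O(\min\{\epsilon/k,1/k\kappa^2\})$, $\zeta=O(\epsilon/k^2\kappa)$, $\epsilon=\Theta(\sqrt{\log(1/\delta)}/\log m)$) yields the announced $\text{poly}(k,\kappa,\log(1/\delta),\log(mn))$ bound.

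The main obstacle I expect is stages~(2) and~(3): converting Frobenius- or operator-norm perturbations of $\A$ into TV distance on $\D{\A\beta_i}$ requires a nontrivial lower bound on $\norm{\A\hat{\V}\x_i}$. This lower bound is what forces \Cref{lem:span} into the argument, ensuring that $\hat{\V}\x_i$ lies essentially in the row space of $\A$ so that the correct denominator is $\sigma_{\min}(\A)$ rather than $0$. The entire dependence on $\kappa$ in the final bound flows through this conversion, and the four parameter choices in \Cref{alg:fas} are calibrated precisely so that the propagated errors in stages~(2)--(5) all close up simultaneously at $O(\epsilon)$ after this rescaling.
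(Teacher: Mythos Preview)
Your proposal follows the same five-stage chain and overall architecture as the paper: split $\TV{\D{\A\V\x_i},\mathcal{O}_i}$ by the triangle inequality, control each piece in relative $\ell_2$ error, then invoke \Cref{thm:sample_largest} and the DCA projection count. The use of \Cref{lem:span}, \Cref{cor:inner_estimation}, and \Cref{lem:tang} is exactly as the paper does it.

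The one substantive difference is in your stage~(3) (the paper's term \textcircled{2}). You bound $\norm{\A\hat{\V}\x_i-\hat{\U}\hat{\U}^T\A\hat{\V}\x_i}$ via the \FKV residual $\FNorm{\A-\hat{\U}\hat{\U}^T\A}\le\sqrt{\epsilon_\U}\FNorm{\A}$ and then divide by $\sigma_{\min}$. The paper instead applies \Cref{lem:span} a \emph{second} time, now to $\hat{\U}$, to conclude that $\A\hat{\V}\x_i$ already lies in the image of $\hat{\U}$; writing $\A\hat{\V}\x_i=\Pi_{\hat{\U}}\A\hat{\V}\x_i$ and using the projection estimate \eqref{eq:app_proj} in \Cref{lem:orthonormal} gives the relative bound $c_2\alpha_\U\norm{\A\hat{\V}\x_i}$ directly, with no factor of $\kappa$. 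This is what justifies the choice $\epsilon_\U=O(\min\{\epsilon/k,1/(k\kappa^2)\})$ in \Cref{alg:fas}; your route would demand the tighter $\epsilon_\U=O(\epsilon^2/(k\kappa^2))$. Both are polynomial, so the theorem survives either way, but the paper's argument is sharper and explains the stated parameter.

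Two minor slips: in stage~(5), the TV error from \Cref{lem:tang} is $\alpha_\U/(1-\alpha_\U)$, not $\gamma$ (which is the failure probability of the sampler); and in stage~(4), \Cref{cor:inner_estimation} bounds $\FNorm{\tilde{\M}-\M}$ by $\zeta\FNorm{\A}\FNorm{\hat{\U}}\FNorm{\hat{\V}}$, not $\zeta k\FNorm{\M}$, though your final $O(\zeta k^2\kappa)$ scaling is correct.
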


\subsubsection{Correctness}
In this subsection, we will analyze the correctness of Algorithm~\ref{alg:fas}.
Firstly, we show that the columns of $\V$ defined in Lemma~\ref{lem:orthonormal} form a basis of row space of matrix $\A$, which is necessary to generate unit vector in row space of $\A$.
The next, we prove that for each $i\in [s]$, distribution $\mathcal{O}_i$ is $\epsilon$-close to distribution $\D{\A\V\x_i}$ in total variant distance.
Once again, we show how to gain the index of largest component of $\A\V\x_i$ from distribution $\mathcal{O}_i$. Finally, by $O(k\log{k})$ random projection, it is enough for us to gain all anchors of matrix $\A$.

The following lemma tells us the approximate singular vectors outputted by \FKV spans the row space of matrix $\A$. And combining with Lemma~\ref{lem:orthonormal}, it gives us that $\V$ also spans the same space, i.e., $\V$ forms an orthonormal basis of row space of matrix $\A$.
\begin{lemma}\label{lem:span}
    Let $\hat{\V}$ be the output of algorithm $\FKV(\A, k, \epsilon, \delta)$. If $\epsilon < \frac{1}{k\kappa^2}$, then with probability $1-\delta$, we obtain
    \[
        \spann{\hat{\V}^{(i)}\middle|i\in[l],l\leq k} = \spann{\A_{(i)}\middle|i\in[m]}.
    \]
\end{lemma}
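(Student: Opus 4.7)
The plan is to prove the equality of spans by a containment-plus-dimension argument, conditioning throughout on the success event of \FKV (which has probability $1-\delta$). Since $\text{rank}(\A)=k$, the rowspace $\spann{\A_{(i)}:i\in[m]}$ has dimension exactly $k$, so it suffices to establish both (a) $\spann{\hat{\V}^{(i)}} \subseteq \spann{\A_{(i)}:i\in[m]}$ and (b) $\dim\spann{\hat{\V}^{(i)}} = k$; containment of a $k$-dimensional subspace in another then upgrades automatically to equality.

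The containment (a) I would obtain directly from the explicit form of the \FKV output recalled earlier, namely $\hat{\V}^{(i)} = \A_T^T u^{(i)}/\sigma^{(i)}$. This writes each column of $\hat{\V}$ as a linear combination of rows of $\A_T$, which are themselves rows of $\A$, so $\hat{\V}^{(i)} \in \spann{\A_{(j)}:j\in T} \subseteq \spann{\A_{(i)}:i\in[m]}$, and the same holds for their span.

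For the dimension identity (b), which is the place where the hypothesis $\epsilon < 1/(k\kappa^2)$ must enter, I would argue by contradiction. Suppose $l := \dim\spann{\hat{\V}^{(i)}} < k$. Then $\text{rank}(\A\hat{\V}\hat{\V}^T) \le l$, so by the Eckart--Young theorem $\FNorm{\A-\A\hat{\V}\hat{\V}^T}^2 \ge \sum_{i=l+1}^{k}\sigma_i^2 \ge \sigma_{\min}^2$, the last step using $l+1\le k$. On the other hand, because $\text{rank}(\A)=k$, the rank-$k$ term in Theorem~\ref{thm:FKV} vanishes and the guarantee collapses to $\FNorm{\A-\A\hat{\V}\hat{\V}^T}^2 \le \epsilon\FNorm{\A}^2 \le \epsilon k\sigma_{\max}^2$. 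Chaining these inequalities gives $\epsilon \ge \sigma_{\min}^2/(k\sigma_{\max}^2) = 1/(k\kappa^2)$, contradicting the hypothesis. Hence $\dim\spann{\hat{\V}^{(i)}} = k$, and combining with (a) yields the equality of spans.

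The main obstacle is converting the Frobenius approximation guarantee of \FKV into a statement about the dimension of $\spann{\hat{\V}^{(i)}}$. The bridge is the trivial but essential observation that $\A\hat{\V}\hat{\V}^T$ factors through $\R^l$ and therefore has rank at most $l$, which allows Eckart--Young to translate any dimensional shortfall of $\hat{\V}$ into a lower bound of order $\sigma_{\min}^2$ on the approximation error; the threshold $\epsilon < 1/(k\kappa^2)$ is calibrated to exactly forbid this scenario.
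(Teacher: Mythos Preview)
Your proof is correct and follows essentially the same contradiction-via-\FKV-error-bound strategy as the paper: the paper assumes the spans differ, picks a unit vector $x$ in the rowspace of $\A$ orthogonal to $\spann{\hat{\V}^{(i)}}$, and obtains $\sigma_{\min}\le\|\A x\|=\|\A x-\A\hat\V\hat\V^Tx\|\le\sqrt{\epsilon}\FNorm{\A}\le\sqrt{\epsilon k}\,\kappa\,\sigma_{\min}$, which is the same chain you get, only with a single direction in place of Eckart--Young. Your write-up is in fact slightly tidier, since you make the containment $\spann{\hat{\V}^{(i)}}\subseteq\text{rowspace}(\A)$ explicit (the paper's implication ``spans unequal $\Rightarrow$ $\exists\,x\in\text{rowspace}(\A)$ with $x\perp\spann{\hat{\V}^{(i)}}$'' tacitly relies on it).
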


\begin{proof}
    By contradiction, we assume that $\spann{\hat{\V}^{(i)}\middle|i\in[k]} \neq \spann{\A_{(i)}\middle|i\in[m]}$, which implies that there exists a unit vector $\x \in \spann{\A_{(i)}\middle|i\in[m]}$ and $\x \perp \spann{\hat{\V}^{(i)}\middle|i\in[k]}$. Then we can obtain $\norm{\A\x-\A\hat{\V}\hat{\V}^T\x}=\norm{\A\x}\ge \sigma_{\min}(\A)$ since $\hat{\V}^T\x = \vec{0}$.
    And according to Theorem~\ref{thm:FKV}, we have
    \[
        \norm{\A\x-\A\hat{\V}\hat{\V}^T\x}\le \sqrt{\epsilon}\FNorm{\A}\le \sqrt{\epsilon k}\kappa\sigma_{\min}(\A).
    \]
    Thus $\sigma_{\min}(\A)\le \sqrt{\epsilon k}\kappa\sigma_{\min}(\A)$,  which makes a contradiction if $\epsilon<1/k\kappa^2$.
\end{proof}
By Lemma \ref{lem:span}, we can generate an approximate random vector in the row space of $\A$ with probability $1-\delta$ in time $O(poly(k,1/\epsilon,\log 1/\delta))$ by \FKV($\A,k,\epsilon,\delta$). Firstly, we obtain the description of approximate right singular vectors by $\FKV$ algorithm, where the error parameter $\epsilon$ is bounded by rank $k$ and condition number $\kappa$ (see in Lemma \ref{lem:span}). Secondly, we generate a random unit vector $\x_i\in\R^k$ as a coordinate vector
referring to a set of orthonormal vectors in Lemma \ref{lem:orthonormal}. Let $V$ denotes the matrix defined in Lemma~\ref{lem:orthonormal}, then it is obvious that its columns form the right singular vectors for matrix $A$. That is, $\hat{\beta_i}=\hat{\V}\x_i$ is an approximate vector of a random vector $\beta=\V\x_i$. 
Next, we show that total variant distance between $\mathcal{O}_i$ and $\D{\A\V\x_i}$ is bounded by constant $\epsilon$.
For convenience, we assume that each step in Algorithm~\ref{alg:fas} succeeds and the final success probability will be given in next subsection.

\begin{lemma}\label{lem:distr_appro}
    For all $i \in [s]$, $\TV{\mathcal{O}_i, \D{\A\V\x_i}}\leq \epsilon$ holds simultaneously with probability $1-\delta$.
\end{lemma}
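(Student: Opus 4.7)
The plan is to decompose the bound along the chain of intermediate distributions displayed in Figure~\ref{fig:relation} and apply the triangle inequality for total variation distance. Since step~(1) is an exact equality ($\beta_i=\V\x_i$ gives $\D{\A\beta_i}=\D{\A\V\x_i}$), only four approximation steps contribute:
\begin{align*}
\TV{\mathcal{O}_i,\D{\A\V\x_i}}
&\le \TV{\D{\A\V\x_i},\D{\A\hat{\V}\x_i}}
 + \TV{\D{\A\hat{\V}\x_i},\D{\hat{\U}\hat{\U}^T\A\hat{\V}\x_i}} \\
&\quad + \TV{\D{\hat{\U}\hat{\U}^T\A\hat{\V}\x_i},\D{\hat{\U}\tilde{y}_i}}
 + \TV{\D{\hat{\U}\tilde{y}_i},\mathcal{O}_i}.
\end{align*}
For the first three summands I will invoke the standard fact that whenever $u,v\in\R^N$ are nonzero, $\TV{\D{u},\D{v}}\le 2\|u-v\|/\|u\|$; for the last summand I will invoke the rejection-sampling guarantee of Lemma~\ref{lem:tang} directly, which already delivers TV error at most $O(\epsilon)$ with probability $1-\gamma$.

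For summand~(a), the numerator is $\|\A(\V-\hat{\V})\x_i\|\le \|\A\|_2\|\V-\hat{\V}\|_F$, controlled by $O(\epsilon_\V)$ via Lemma~\ref{lem:orthonormal} and Lemma~\ref{FKVorth}. The denominator is $\|\A\V\x_i\|\ge\sigma_{\min}(\A)\|\x_i\|=\sigma_{\min}(\A)$, since by Lemma~\ref{lem:span} combined with Lemma~\ref{lem:orthonormal} the columns of $\V$ form an orthonormal basis of the row space of $\A$. The ratio is thus $O(\epsilon_\V\sqrt{k}\kappa)$. For summand~(b), the numerator $\|(\A-\hat{\U}\hat{\U}^T\A)\hat{\V}\x_i\|\le \|\A-\hat{\U}\hat{\U}^T\A\|_F\|\hat{\V}\x_i\|$ is at most $O(\sqrt{\epsilon_\U})\FNorm{\A}$ by Theorem~\ref{thm:FKV} applied to $\A^T$, while $\|\A\hat{\V}\x_i\|\ge \|\A\V\x_i\|-\|\A(\V-\hat{\V})\x_i\|\ge \sigma_{\min}(\A)/2$ for $\epsilon_\V$ small enough; this gives ratio $O(\sqrt{\epsilon_\U k}\kappa)$. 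For summand~(c), $\tilde{y}_i-\hat{\U}^T\A\hat{\V}\x_i=(\tilde{\M}-\M)\x_i$, so the numerator is bounded by $\|\tilde{\M}-\M\|_F$, which inherits the entrywise relative error $\zeta$ from Lemma~\ref{cor:inner_estimation} and thus is $O(\zeta k\|\A\|_F)$ after summing $k^2$ entries; multiplying by $\|\hat{\U}\|_2=O(1)$ (from $\alpha$-orthonormality) and dividing by the same denominator as above yields $O(\zeta k^2\kappa)$. Substituting the values of $\epsilon_\V$, $\epsilon_\U$ and $\zeta$ chosen in Algorithm~\ref{alg:fas} makes each of the four summands at most $\epsilon/4$, so their sum is at most $\epsilon$.

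Finally, the two \FKV calls (lines~\ref{ln:fkvv}--\ref{ln:fkvu}), the inner-product estimation of $\M$ (line~\ref{ln:inner-est}), and each of the $s$ rejection-sampling invocations (line~\ref{ln:rej}) carry failure parameters $\delta_\V,\delta_\U,\eta,\gamma$ tuned so that a union bound delivers joint success with probability $1-\delta$; since the entire argument above is deterministic conditional on these successes, the lemma follows. The main obstacle I anticipate is controlling the denominator of the TV-distance inequality at each step: the whole calculation collapses unless $\|\A\V\x_i\|$ (and its perturbed variants) is bounded below by roughly $\sigma_{\min}(\A)$, which is exactly why the algorithm restricts $\beta_i$ to lie in the row space of $\A$ via Lemma~\ref{lem:span} and imposes $\epsilon_\V<1/k\kappa^2$. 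Verifying that the perturbations of $\V$ and of $\hat{\U}\hat{\U}^T\A$ do not destroy this lower bound, and propagating it cleanly through summands~(b) and~(c), will be the most delicate piece of the calculation.
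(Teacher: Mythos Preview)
Your decomposition into four TV-distance summands and the use of the $\ell_2$-to-TV conversion match the paper exactly, and your handling of summands~(a), (c), and~(d) is essentially the same as the paper's (modulo harmless constant factors). The real discrepancy is summand~(b).

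For $\TV{\D{\A\hat{\V}\x_i},\D{\hat{\U}\hat{\U}^T\A\hat{\V}\x_i}}$ you invoke Theorem~\ref{thm:FKV} on $\A^T$ to bound the numerator by $\|\A-\hat{\U}\hat{\U}^T\A\|_F\,\|\hat{\V}\x_i\|\le O(\sqrt{\epsilon_\U})\FNorm{\A}$, and then divide by $\sigma_{\min}(\A)/2$ to obtain a ratio $O(\sqrt{\epsilon_\U k}\,\kappa)$. This bound is correct but too weak: with the algorithm's choice $\epsilon_\U=O(\min\{\epsilon/k,1/k\kappa^2\})$ it evaluates to either $O(\sqrt{\epsilon}\,\kappa)$ or $O(1)$, neither of which is $\le\epsilon/4$. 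So your final claim that ``substituting the values of $\epsilon_\V,\epsilon_\U,\zeta$ chosen in Algorithm~\ref{alg:fas} makes each of the four summands at most $\epsilon/4$'' fails at this step.

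The paper avoids this loss by a sharper, $\kappa$-free argument: by Lemma~\ref{lem:span} (applied to $\A^T$ with $\epsilon_\U<1/k\kappa^2$) the columns of $\hat{\U}$ span the column space of $\A$, so the vector $\A\hat{\V}\x_i$ already lies in the image of $\hat{\U}$ and hence $\Pi_{\hat{\U}}\A\hat{\V}\x_i=\A\hat{\V}\x_i$. Therefore
\[
\|\A\hat{\V}\x_i-\hat{\U}\hat{\U}^T\A\hat{\V}\x_i\|
=\|(\Pi_{\hat{\U}}-\hat{\U}\hat{\U}^T)\A\hat{\V}\x_i\|
\le \FNorm{\Pi_{\hat{\U}}-\hat{\U}\hat{\U}^T}\,\|\A\hat{\V}\x_i\|
\le c_2\alpha_\U\,\|\A\hat{\V}\x_i\|,
\]
using Eq.~\eqref{eq:app_proj} of Lemma~\ref{lem:orthonormal}. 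This gives a relative error $O(\alpha_\U)=O(\epsilon_\U k)$ with no $\kappa$ factor, which is exactly what makes $\epsilon_\U=O(\epsilon/k)$ sufficient. Replace your Theorem~\ref{thm:FKV} step here by this projector argument and the rest of your proof goes through unchanged.
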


In the rest, without ambiguity, we use notations $\mathcal{O}$, $\x$ instead of $\mathcal{O}_i$, $\x_i$. By applying triangle inequality, we divide the left part of inequality into four parts (the intuition idea please ref Figure~\ref{fig:relation}):
\begin{align*}
    \TV{\D{\A\V\x}, \mathcal{O}} & \leq \underbrace{\TV{\D{\A\V\x}, \D{\A\hat{\V}\x}}}_{\textcircled{1}}
    + \underbrace{\TV{\D{\A\hat{\V}\x}, \D{\hat{\U}\hat{\U}^T\A\hat{\V}\x}}}_{\textcircled{2}}
                                 + \underbrace{\TV{\D{\hat{\U}\hat{\U}^T\A\hat{\V}\x}, \D{\hat{U}\tilde{\M}\x}}}_{\textcircled{3}}
                                 + \underbrace{\TV{\D{\hat{\U}\tilde{\M}\x}, \mathcal{O}} }_{\textcircled{4}}.
\end{align*}

Thus, we only need to prove that \textcircled{1}, \textcircled{2}, \textcircled{3}, \textcircled{4} $<\frac{\epsilon}{4}$, respectively.
In addition, given $u,v\in\R^n$, if $\norm{u-v}\le \frac{\epsilon}{2}\norm{u}$, then $\TV{\D{u},\D{v}}\le \epsilon$.
For \textcircled{1}, \textcircled{2} and \textcircled{3}, we only show their $\ell_2$-norm version, \emph{i.e.},
\begin{itemize}
    \item $\inlinenorm{\A\V\x - \A\hat{\V}\x}\le \frac{\epsilon}{8} \inlinenorm{\A\V\x}$;
    \item $\inlinenorm{\A\hat{\V}\x - \hat{\U}\hat{\U}^T\A\hat{\V}\x} \le \frac{\epsilon}{8}\inlinenorm{\A\hat{\V}\x}$;
    \item $\inlinenorm{\hat{\U}\hat{\U}^T\A\hat{\V}\x - \hat{\U}\tilde{\M}\x} \le \frac{\epsilon}{8}\inlinenorm{\hat{\U}\hat{\U}^T\A\hat{\V}\x}$.
\end{itemize}

For convenience, in the rest part, let $\alpha_\U = \epsilon_\U k / 16$ and $\alpha_\V = \epsilon_\V k / 16$ represent approximate ratio for orthonormality of $\hat{\U}$ and $\hat{\V}$ based on Lemma \ref{FKVorth}, respectively.

Before we start our proof, we list two tools which are used to prove \textcircled{3} and \textcircled{4}, respectively.

Based on rejection sampling, Lemma \ref{lem:tang} shows that sampling from linear combination of $\alpha$-approximately orthogonal vectors can be quickly realized without knowledge of norms of these vectors (see Algorithm~\ref{alg:rej}).
\begin{lemma}[\cite{tang2018recommendation}]\label{lem:tang}
    Given a set of $\alpha$-approximately orthonormal vectors $\hat{\V}\in \R^{n\times k}$ in vector sample model, and an input vector $w\in\R^{k}$, there exists an algorithm outputting a sample from a distribution $\frac{\alpha}{1-\alpha}$-close to $\D{\hat{\V}w}$ with probability $1-\gamma$ using $O(k^2\log\frac{1}{\gamma}(1+O(\alpha)))$ queries and samples.
\end{lemma}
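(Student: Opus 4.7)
The plan is to prove Lemma~\ref{lem:tang} via a rejection-sampling procedure built on top of the vector sample model access to the columns of $\hat{\V}$. The output index will be distributed close to $\D{\hat{\V}w}$, with the stated total-variation slack arising from the $\alpha$-orthonormality gap, and the approximate orthonormality will enter both when bounding the per-iteration success probability and when bounding the TV deviation of the accepted samples.

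Concretely, the proposal distribution $p$ on $[n]$ is implemented in two stages: first draw $j\in[k]$ with probability $w_j^2\norm{\hat{\V}^{(j)}}^2/Z$, where $Z=\sum_{j'}w_{j'}^2\norm{\hat{\V}^{(j')}}^2$ (all the $\norm{\hat{\V}^{(j)}}$ are available in $O(1)$ via the sample model, and building the categorical distribution on $[k]$ and sampling from it costs $O(k)$); then draw $i\in[n]$ from $\D{\hat{\V}^{(j)}}$ in $O(\log n)$ time. The resulting marginal is $p(i)=\sum_j w_j^2\hat{V}_{ij}^2/Z$. Given a candidate $i$, I would query the $k$ entries $\hat{V}_{i1},\ldots,\hat{V}_{ik}$, form $(\hat{\V}w)_i^2$ and $\sum_j w_j^2\hat{V}_{ij}^2$ directly, and accept with probability $r(i):=(\hat{\V}w)_i^2\big/\bigl(k\sum_j w_j^2\hat{V}_{ij}^2\bigr)$. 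Cauchy--Schwarz applied with one sequence equal to the all-ones vector gives $r(i)\in[0,1]$, and a one-line cancellation shows that the distribution of accepted indices equals $(\hat{\V}w)_i^2/\norm{\hat{\V}w}^2=\D{\hat{\V}w}(i)$.

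For the complexity, the per-round acceptance probability is $P(\mathrm{accept})=\norm{\hat{\V}w}^2/(kZ)$. Expanding $w^{\top}\hat{\V}^{\top}\hat{\V}w$ into its diagonal and off-diagonal contributions and plugging in the $\alpha$-orthonormality bounds $\norm{\hat{\V}^{(j)}}^2\in[1-\alpha/k,\,1+\alpha/k]$ and $|\hat{\V}^{(s)}\cdot\hat{\V}^{(t)}|\le\alpha/k$, together with $\sum_{s\ne t}|w_sw_t|\le(k-1)\norm{w}^2$, yields $\norm{\hat{\V}w}^2\ge(1-\alpha)\norm{w}^2$; symmetrically, $Z\le(1+\alpha/k)\norm{w}^2$. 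Hence $P(\mathrm{accept})=\Omega(1/k)$ with a $(1+O(\alpha))$ leading factor, and a geometric-tail argument gives that $O(k\log(1/\gamma))$ iterations suffice to obtain an accepted sample with probability $\ge 1-\gamma$. Each iteration spends $O(\log k+\log n)$ on the two-stage proposal and $O(k)$ entry queries on the acceptance test, so the overall cost is $O(k^2\log(1/\gamma)(1+O(\alpha)))$, matching the statement.

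The main obstacle is pinning down the precise $\alpha/(1-\alpha)$ total-variation bound. My rejection step is exact given exact query access to $\hat{\V}$'s entries, so the TV error must come from comparing $\D{\hat{\V}w}$ to the ``intended'' target $\D{\V w}$ where $\V$ is the exactly orthonormal surrogate produced by Lemma~\ref{lem:orthonormal}. I would bound $\norm{\hat{\V}w-\V w}\le\FNorm{\hat{\V}-\V}\cdot\norm{w}=O(\alpha)\norm{w}$, use $\norm{\V w}=\norm{w}$, and then apply the elementary estimate $\TV{\D{u}-\D{v}}\le\norm{u-v}/\min(\norm{u},\norm{v})$ together with $\norm{\hat{\V}w}\ge(1-\alpha)^{1/2}\norm{w}$; a short algebraic manipulation on the resulting ratio recovers exactly the stated $\alpha/(1-\alpha)$ form.
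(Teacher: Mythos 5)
Your construction is essentially the paper's own route: it is precisely the rejection-sampling procedure of Algorithm~\ref{alg:rej} imported from \cite{tang2018recommendation} (with the typos there corrected), using the proposal proportional to $w_j^2\norm{\hat{\V}^{(j)}}^2$, the Cauchy--Schwarz acceptance ratio $(\hat{\V}w)_i^2/\bigl(k\sum_j w_j^2\hat{V}_{ij}^2\bigr)$, and the $\alpha$-orthonormality bounds $\norm{\hat{\V}w}^2\ge(1-\alpha)\norm{w}^2$, $Z\le(1+\alpha/k)\norm{w}^2$ to get acceptance probability $\Omega(1/k)$ and hence $O\bigl(k^2\log\frac{1}{\gamma}(1+O(\alpha))\bigr)$ queries and samples. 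Since your accepted samples are distributed exactly according to $\D{\hat{\V}w}$, the stated $\frac{\alpha}{1-\alpha}$-closeness to $\D{\hat{\V}w}$ holds a fortiori, so your final paragraph (re-deriving that constant as a distance to $\D{\V w}$, via a bound whose constants, e.g.\ the $\alpha/\sqrt{2}$ from Lemma~\ref{lem:orthonormal}, would not literally give $\frac{\alpha}{1-\alpha}$) is unnecessary for the statement as written.
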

\begin{algorithm}[tb]
    \caption{Rejection Sampling for $\D{\hat{U}y}$}
    \label{alg:rej}
    \textbf{Input}: A set of approximately orthonormal vectors $\hat{U}^{(j)}\in \mathbb{R}^{m}$ (for $j=1,\dots, k)$ in {\em vector sample model} and a vector $y\in \mathbb{R}^k$.\\
    \textbf{Output}: A sample $s$ subjecting to $\D{\hat{U}y}$.\\
    \begin{algorithmic}[1] 
        \STATE Sample $j$ according to probability proportional to $\|\hat{U}^{(j)} y_j\|$.
        \STATE Sample $s$ from $\D{\hat{U}^{(j)}}$.
        \STATE Compute $r_s=\frac{(\hat{U}y)_s^2}{k\sum_i(y_i\hat{U}_{ij})}$.
        \STATE Accept $s$ with probability $r_s$, otherwise, restart.
    \end{algorithmic}
\end{algorithm}




\begin{lemma}\label{cor:inner_estimation}
    Given $\A\in \mathbb{R}^{m\times n}$ in {\em matrix sample model} and  $L\in\mathbb{R}^{k_1\times m}$ and $R\in\mathbb{R}^{n\times k_2}$ in query model, let $\M=L\A R$, then we can output a matrix $\tilde{\M}\in \mathbb{R}^{k_1\times k_2}$, with probability $1-\eta$, such that
    \[
        \FNorm{\M-\tilde{\M}}\leq \zeta\FNorm{\A}\FNorm{L}\FNorm{R}
    \]
    by $O\left(k_1k_2\frac{1}{\zeta^2}\log \frac{1}{\eta} \right)$ queries and samples.
\end{lemma}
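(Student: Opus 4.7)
My plan is to reduce the problem to estimating $k_1 k_2$ scalar quantities---one per entry of $\M$---using Monte Carlo importance sampling from the $\ell_2$ distribution on $\A$ supplied by the matrix sample model, and then to combine the per-entry error bounds through a telescoping Frobenius-norm identity.

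First I would build an unbiased estimator for a single entry $\M_{ij} = L_{(i)}\A R^{(j)} = \sum_{s,t} L_{i,s}\A_{s,t} R_{t,j}$. Using the matrix sample model on $\A$, the index pair $(s,t)$ can be drawn with probability $p_{s,t} = \A_{s,t}^2/\FNorm{\A}^2$ in time $O(\log m + \log n)$: first sample $s$ from $\tilde\A$, then sample $t$ from $\A_{(s)}$. Querying $L_{i,s}$, $R_{t,j}$, and $\A_{s,t}$ then produces the importance-weighted estimator
\[
    X_{ij} \;=\; \frac{\FNorm{\A}^2\, L_{i,s}\, R_{t,j}}{\A_{s,t}}.
\]
A short calculation gives $\mathbb{E}[X_{ij}] = \M_{ij}$ and $\mathbb{E}[X_{ij}^2] = \FNorm{\A}^2 \norm{L_{(i)}}^2 \norm{R^{(j)}}^2$, so $\mathrm{Var}(X_{ij})$ is bounded by the same product.

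Second, I would amplify each per-entry estimator via median-of-means. Averaging $T = \Theta(1/\zeta^2)$ copies within each bucket over $B = \Theta(\log(k_1 k_2/\eta))$ buckets and taking the median of the bucket means yields a value $\tilde{\M}_{ij}$ satisfying $|\tilde{\M}_{ij} - \M_{ij}| \le \zeta\FNorm{\A}\norm{L_{(i)}}\norm{R^{(j)}}$ with failure probability at most $\eta/(k_1 k_2)$: Chebyshev supplies the per-bucket tail with constant confidence, and a Chernoff bound on the number of ``good'' buckets amplifies this to the target confidence. A union bound over the $k_1 k_2$ entries makes these per-entry bounds hold simultaneously with probability at least $1-\eta$, and squaring and summing telescopes as
\[
    \FNorm{\M-\tilde{\M}}^2 \;\le\; \zeta^2\FNorm{\A}^2 \sum_i \norm{L_{(i)}}^2 \sum_j \norm{R^{(j)}}^2 \;=\; \zeta^2\FNorm{\A}^2\FNorm{L}^2\FNorm{R}^2,
\]
giving the claimed Frobenius-norm guarantee. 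The total work is $k_1 k_2 \cdot T B = O(k_1 k_2 / \zeta^2 \cdot \log(1/\eta))$ queries and samples on $\A$, $L$, and $R$ after absorbing the mild $\log(k_1 k_2)$ factor.

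The main care point is the non-uniform choice of per-entry tolerance: scaling the allowed error by $\norm{L_{(i)}}\norm{R^{(j)}}$ is precisely what makes the sum of squared entry errors collapse to $\zeta^2\FNorm{\A}^2\FNorm{L}^2\FNorm{R}^2$ with no residual dependence on individual row/column norms; a uniform tolerance would fail to give a bound in this exact form. A secondary subtlety is that $X_{ij}$ has only a second-moment bound and no uniform $\ell_\infty$ control (entries $\A_{s,t}$ near zero make the ratio $1/\A_{s,t}$ large), so Bernstein or Hoeffding inequalities do not apply directly---median-of-means is the appropriate tool, producing sub-Gaussian-style tails from variance information alone.
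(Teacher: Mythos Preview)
Your proposal is correct and follows essentially the same approach as the paper: per-entry estimation with non-uniform tolerance $\zeta\FNorm{\A}\norm{L_{(i)}}\norm{R^{(j)}}$, a union bound over the $k_1k_2$ entries, and the same Frobenius-norm telescoping identity. The only difference is cosmetic: the paper invokes Tang's inner-product estimation result from \cite{tang2018recommendation} as a black box for each entry, whereas you unpack that black box explicitly (importance sampling from the $\ell_2$ distribution on $\A$ plus median-of-means), arriving at the identical per-entry guarantee and complexity.
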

\vspace{-0.2cm}
\begin{proof}
    Let $\M_{ij}=L_{(i)}\A R^{(j)}$ with $i\in[k_1]$ and $j\in[k_2]$. In \cite{tang2018recommendation}, there exists an algorithm that outputs an estimation of $\M_{ij}$ ($\tilde{\M}_{ij}$) to precision $\zeta\FNorm{\A}\norm{L_{(i)}}\norm{R^{(j)}}$ with probability $1-\eta'$ in time $O\left( \frac{1}{\zeta^2}\log \frac{1}{\eta'} \right)$. Let $\eta'=1-(1-\eta)^{1/(k_1k_2)}$. We can output $\tilde{\M}$ with probability $1-\eta$ utilizing $O\left(k_1k_2\frac{1}{\zeta^2}\log \frac{1}{1-(1-\eta)^{1/k^2}}\right)=O\left(k_1k_2\frac{1}{\zeta^2}\log \frac{1}{\eta}\right)$ queries and samples respectively where $\tilde{\M}$ satisfies
    \begin{align*}
        \FNorm{\M-\tilde{\M}}^2 &=\sum_{ i\in[k_1],j\in[k_2]}|\M_{ij}-\tilde{\M}_{ij}|^2 \\
                                &\le \sum_{i\in[k_1],j\in[k_2]}\zeta^2\FNorm{\A}^2\norm{L_{(i)}}^2\norm{R^{(j)}}^2 \\
                                &= \zeta^2 \FNorm{\A}^2 \sum_{i\in[k_1]}\norm{L_{(i)}}^2 \sum_{j\in[k_2]} \norm{R^{(j)}}^2 \\
                                &=\zeta^2 \FNorm{\A}^2 \FNorm{L}^2\FNorm{R}^2.
    \end{align*}
\end{proof}


\begin{proof}[proof of Lemma~\ref{lem:distr_appro}]

    {\bf Upper bound for \textcircled{1}}. By Lemma \ref{lem:span}, $\V\x$ is a unit random vector sampled from the row space of $\A$ with probability  $1-\delta_{\V}:=(1-\delta)^{\frac{1}{4}}$ if $\epsilon_{\V}<\frac{1}{k\kappa^2}$. From Eq. \eqref{eq:orthonormal} in Lemma \ref{lem:orthonormal}, with probability $1-\delta_{\V}$
    \[
        \norm{\A\V\x-\A\hat{\V}\x} \leq \FNorm{\A}\FNorm{\V-\hat{\V}}\norm{\x} \leq (\alpha_\V/\sqrt{2} + c_1 \alpha_\V^2)\FNorm{\A}.
    \]
    Combing with $\FNorm{\A} \leq \sqrt{k}\kappa \sigma_{\min}(\A)\leq \sqrt{k}\kappa \norm{\A\V\x}$, we gain
    \begin{align}\label{ineq:1}
        \norm{\A\V\x-\A\hat{\V}\x} \leq  (\alpha_\V/\sqrt{2} + c_1 \alpha_\V^2)\sqrt{k}\kappa\norm{\A\V\x}.
    \end{align}
    Eq. \eqref{ineq:1} satisfies $\norm{\A\V\x-\A\hat{\V}\x}\le \frac{\epsilon}{8}\norm{\A\V\x}$ with $\epsilon_{\V}=O\left(\min\left\{\frac{\epsilon}{\sqrt{k}\kappa},\frac{1}{k\kappa^2}\right\}\right)$. \\

    \noindent{\bf Upper bound for \textcircled{2}}. According to Lemma \ref{lem:span},  the columns of $\hat{\U}$ span a space equal to the column space of $\A$ if $\epsilon_{\U}\le \frac{1}{k\kappa^2}$ with probability $1-\delta_{\U}:=(1-\delta)^{\frac{1}{4}}$. Let $\Pi_{\hat{\U}}$ denote the orthonormal projector to image of $\hat{\U}$ (column space of $\A$). Similarly, $\Pi_{\hat{\U}}^{\perp}$ denotes the orthonormal projector to the orthogonal space of column space of $\hat{\U}$.
    \begin{align}
      &\norm{\A\hat{\V}\x-\hat{\U}\hat{\U}^T\A\hat{\V}\x} \nonumber \\
        =&\norm{(\Pi_{\hat{\U}}+\Pi_{\hat{\U}}^\perp)\A\hat{\V}\x-\hat{\U}\hat{\U}^T\A\hat{\V}\x}\nonumber\\
        =&\norm{\Pi_{\hat{\U}}\A\hat{\V}\x-\hat{\U}\hat{\U}^T\A\hat{\V}\x} \nonumber\\
        \leq &\FNorm{\Pi_{\hat{\U}} - \hat{\U}\hat{\U}^T}\norm{\A\hat{\V}\x} \leq c_2\alpha_\U \norm{\A\hat{\V}\x},  \label{ineq:2}
    \end{align}
    based on Eq. \eqref{eq:app_proj} in Lemma \ref{lem:orthonormal}. If $\epsilon_{\U}=O\left(\min\left\{\frac{\epsilon}{k},\frac{1}{k\kappa^2}\right\}\right)$, $\norm{\A\hat{\V}\x-\hat{\U}\hat{\U}^T\A\hat{\V}\x}\le \frac{\epsilon}{8}\norm{\A\hat{\V}\x}$. \\

    \noindent{\bf Upper bound for \textcircled{3}}. When $\epsilon_{\U}$ and $\epsilon_{\V}$ are discussed above, with probability $1-\eta:=(1-\delta)^{\frac{1}{4}}$, we have
    \begin{equation}\label{ineq:a}
        \norm{\hat{\U}\hat{\U}^T\A\hat{\V}\x}\ge\left(1-\frac{\epsilon}{8}\right)\norm{\A\hat{\V}x}\ge\left(1-\frac{\epsilon}{8}\right)^2\norm{\A\V\x}.
    \end{equation}
    According to Lemma \ref{cor:inner_estimation}, we obtain
    \begin{equation}\label{ineq:b}
      \FNorm{\hat{\U}^T\A\hat{\V} - \tilde{\M}} \le \zeta \FNorm{\hat{\U}}\FNorm{\hat{\V}}\FNorm{\A}\leq \zeta k^{1.5}\kappa\sqrt{(1+\frac{\alpha_\U}{k})(1+\frac{\alpha_\V}{k})}\norm{\A\V\x}.
    \end{equation}
    Combining Eq.~\eqref{ineq:a} and Eq.~\eqref{ineq:b}, the following holds
    \begin{align}\label{ineq:3}
        &\norm{\hat{\U}\hat{\U}^T\A\hat{\V}\x - \hat{\U}\tilde{\M}\x} \nonumber\\
        \le &\FNorm{\hat{\U}}\FNorm{\hat{\U}^T\A\hat{\V} - \tilde{\M}} \nonumber\\
        \leq &\zeta k^2\kappa\sqrt{(1+\frac{\alpha_\U}{k})^2(1+\frac{\alpha_\V}{k})}\norm{\A\V\x}\nonumber\\
        \leq &\zeta k^2\kappa\sqrt{(1+\frac{\alpha_\U}{k})^2(1+\frac{\alpha_\V}{k})}/ (1-\frac{\epsilon}{8})^2\norm{\hat{\U}\hat{\U}^T\A\hat{\V}\x}.
    \end{align}
    If $\zeta=O\left(\frac{\epsilon}{k^2\kappa}\right)$, then $\norm{\hat{\U}\hat{\U}^T\A\hat{\V}\x - \hat{\U}\tilde{\M}\x}<\frac{\epsilon}{8}\norm{\hat{\U}\hat{\U}^T\A\hat{\V}\x}$ holds.\\

    \noindent{\bf Upper bound for \textcircled{4}}. Since   $\epsilon_{\U}=O\left(\min\left\{\frac{\epsilon}{k},\frac{1}{k\kappa^2}\right\}\right)$ as discussed before, directly taking usage of Lemma~\ref{lem:tang}, with probability $1-\gamma:=(1-\delta)^{\frac{1}{4s}}$ we have
    \begin{align} \label{ineq:4}
        \TV{\D{\hat{\U}\tilde{\y}}, \mathcal{O}} \le \frac{\alpha_\U}{1-\alpha_\U} <\frac{\epsilon}{4}.
    \end{align}
    Hence, Algorithm~\ref{alg:fas} generates a distribution $\mathcal{O}_i$ which satisfies $\TV{O_i, \D{\A\V\x_i}}\leq \epsilon$ for $s$ random unit vectors generated simultaneously with probability $1-\delta$.
\end{proof}

The following theorem tells us how to find the largest component of $\A\V\x_i$ from distribution $\mathcal{O}_i$.
\begin{theorem}[Restatement of  Theorem 1 in \cite{du2018quantum}]\label{thm:sample_largest}
    Let $\D{}$ be a distribution over $[m]$ and $\D{}'$ is another distribution simulating $\D{}$ with total variant error $\epsilon$. Let $\x_1, \ldots, \x_N$ be examples independently sampled from $\D{}'$ and $N_i$ be the number of examples taking value of $i$. Let $\D{\max} = \max\{\D{1},\ldots,\D{m}\}$ and $\D{secmax}=\max\{\D{1}, \ldots, \D{m}\} \backslash \D{\max}$.
    If $\D{\max}-\D{secmax}>2\sqrt{2\log(4N/\delta)/N} + \epsilon$, then, for any $\delta>0$, with a probability
    at least $1-\delta$, we have
    \[
        \arg\max_i\{N_i | 1\le i \le N \} = \arg\max_i\{p_i|1\le i \le m\}.
    \]
\end{theorem}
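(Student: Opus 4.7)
The plan is a three-step concentration argument: first bound the sampling error of $N_i/N$ around $\D{}'(i)$ by Hoeffding's inequality, then transfer from $\D{}'$ to $\D{}$ via total-variation closeness, and finally argue that the assumed gap $\D{\max}-\D{secmax}$ is large enough that neither source of error can flip the argmax. Writing $i^*:=\arg\max_i \D{i}$, it suffices to show on a high-probability event that $N_{i^*}>N_j$ for every $j\neq i^*$; indices with $N_j=0$ are automatically ruled out of the argmax competition.

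For the sampling step, each count $N_i$ is $\mathrm{Binomial}(N,\D{}'(i))$, so Hoeffding's inequality gives
\[
\Pr\!\left[\bigl|N_i/N-\D{}'(i)\bigr|\ge t\right]\le 2\exp(-2Nt^2).
\]
Setting $t:=\sqrt{\log(4N/\delta)/(2N)}$ calibrates the per-index failure probability to $\delta/(2N)$ and makes $2t=\sqrt{2\log(4N/\delta)/N}$ match the first summand of the hypothesis. Since at most $N$ distinct indices can appear among $N$ samples, a union bound over those indices, together with a separate one-sided Hoeffding bound at the fixed index $i^*$, yields with probability at least $1-\delta$ the uniform estimate $|N_i/N-\D{}'(i)|\le t$ for every index that can compete for the empirical argmax.

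For the simulation step, $\TV{\D{}-\D{}'}\le\epsilon$ together with the supremum-over-events characterisation of total variation gives the pointwise bound $|\D{}'(i)-\D{i}|\le\epsilon$ for every $i$. Combined with the Hoeffding estimate, on the good event one obtains
\[
\frac{N_{i^*}}{N}-\frac{N_j}{N}\ge \D{i^*}-\D{j}-2t-\bigl(|\D{}'(i^*)-\D{i^*}|+|\D{}'(j)-\D{j}|\bigr),
\]
and the last correction is controlled by total variation. The hypothesis on $\D{\max}-\D{secmax}$ then forces this right-hand side to be strictly positive for every $j\neq i^*$, so $N_{i^*}>N_j$ holds throughout and the empirical argmax coincides with $i^*$.

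The main obstacle I anticipate is the bookkeeping that squeezes out the $\log(4N/\delta)$ dependence (rather than the easier $\log(4m/\delta)$): one has to restrict the union bound to the at most $N$ indices that actually receive a sample and handle $i^*$ via its own one-sided tail, noting that the gap hypothesis implies $\D{i^*}$ is large enough to guarantee $i^*$ appears among the samples with high probability. Once this is set up, the rest is a routine concentration-plus-triangle-inequality computation, with the constants in the $\epsilon$ term tracking the factor-$\tfrac{1}{2}$ convention of the paper's TV definition.
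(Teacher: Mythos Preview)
The paper does not supply a proof of this theorem at all: it is explicitly labelled a ``Restatement of Theorem~1 in \cite{du2018quantum}'' and is invoked as a black box, so there is nothing in the paper to compare your argument against. Your three-step plan (Hoeffding on the empirical frequencies, pointwise transfer via total variation, then a gap argument) is the standard and correct route for statements of this type.

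One small bookkeeping remark: the pointwise bound $|\D{}'(i)-\D{i}|\le\epsilon$ applied at both $i^*$ and a competitor $j$ costs $2\epsilon$, not $\epsilon$, so with your choice of $t$ the argument literally yields the conclusion under the slightly stronger hypothesis $\D{\max}-\D{secmax}>\sqrt{2\log(4N/\delta)/N}+2\epsilon$ rather than $2\sqrt{2\log(4N/\delta)/N}+\epsilon$. This is a constant-factor mismatch with the quoted statement (and the $2\epsilon$ loss is tight for $m=2$), not a flaw in your method; since the paper only uses the theorem with asymptotic parameter choices, it is immaterial here.
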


As mentioned in \cite{du2018quantum}, the assumption about the gap between $\D{max}$ and $\D{secmax}$ is easy to satisfy in practice. By choosing $N=\log^2m$ and $\epsilon < 2\sqrt{2\log(4\log^2m/\delta)/\log^2m}$, we have $\D{max}-\D{secmax}>4\sqrt{2\log(4\log^2m/\delta)/\log^2m}$, which will converge to zero as $m$ goes to infinity.

To estimate the number of random projections we need, we denote $p^*_i$ the probability that after random projection $\beta$, a data point $A_{(i)}$ is identified as an anchor in subspace, i.e.,
$$p^*_i = \Pr_\beta(i = \mathrm{argmax}_i\{(\A\beta)_i\}).$$
In \cite{zhou2014divide}, if $p_i^* > k/\alpha$ for a constant $\alpha$, with $s=\frac{3}{\alpha} k \log k$ random projections, all anchors can be found with probability at least $1-k\exp(-\alpha s/3k)$.



\subsubsection{Complexity and Success Probability}
Note that Algorithm~\ref{alg:fas} involves operations that query and sample from matrix $A$, $\hat{U}$ and $\hat{V}$, but those operations can be implemented in $O(\log(mn)poly(k,\kappa, 1/\epsilon))$ time. Thus, in the following analysis, we just ignore the time complexity of those operations but multiple it to the final time complexity.

The running time and failure probability mainly concentrates on lines \ref{ln:fkvv}, \ref{ln:fkvu}, \ref{ln:inner-est} and \ref{ln:rej} in Algorithm \ref{alg:fas}.
The running time of lines \ref{ln:fkvv} and \ref{ln:fkvu} are $O\left(poly(k,1/\epsilon_{\V},\log 1/\delta_{\V})\right)$ and
$O\left(poly(k,\frac{1}{\epsilon_{\U}},\log \frac{1}{\delta_{\U}})\right)$, respectively, according to Theorem~\ref{thm:FKV}.
And line \ref{ln:inner-est} takes $O\left(k^2\frac{1}{\zeta^2}\log \frac{1}{\eta}\right)$ to estimate matrix $\tilde{M}$ according to Lemma~\ref{cor:inner_estimation}. And line \ref{ln:rej} with $s$ iterations totally spends $O\left(sk^2\log \frac{1}{\gamma}poly\log m\right)$. In the perspective of failure probability, lines \ref{ln:fkvv}, \ref{ln:fkvu} and \ref{ln:inner-est} take the same failure probabilities $(1-\eta)^{\frac{1}{4}}$. And line \ref{ln:rej} takes
$(1-\eta)^{\frac{1}{4s}}$ for each iteration.

Above all, the time complexity of FAS is $O\left(poly\left(k,\kappa, \log \frac{1}{\delta},\log mn\right)\right)$. The success probability is $1-\delta$.

\section{Conclusion}
\label{sec:con}
This paper presents a classical randomized algorithm FAS which dramatically reduces the running time to find anchors of low-rank matrix.
Especially, we achieve exponential speedup when the rank is logarithmic of the input scale.
Although our algorithm running in polynomial of logarithm of matrix dimension, it still has a bad dependence on rank $k$. In the future,
we plan to improve its dependence on rank as well as analyze its noise tolerance.


\bibliographystyle{alpha}
\bibliography{reference}

\newcommand{\etalchar}[1]{$^{#1}$}
\begin{thebibliography}{GTLST12}

\bibitem[AGKM12]{arora2012computing}
Sanjeev Arora, Rong Ge, Ravindran Kannan, and Ankur Moitra.
\newblock Computing a nonnegative matrix factorization--provably.
\newblock In {\em Proceedings of the forty-fourth annual ACM symposium on
  Theory of computing}, pages 145--162. ACM, 2012.

\bibitem[AGM12]{arora2012learning}
Sanjeev Arora, Rong Ge, and Ankur Moitra.
\newblock Learning topic models--going beyond svd.
\newblock In {\em Foundations of Computer Science (FOCS), 2012 IEEE 53rd Annual
  Symposium on}, pages 1--10. IEEE, 2012.

\bibitem[CLW18]{chia2018quantum}
Nai-Hui Chia, Han-Hsuan Lin, and Chunhao Wang.
\newblock Quantum-inspired sublinear classical algorithms for solving low-rank
  linear systems.
\newblock {\em arXiv preprint arXiv:1811.04852}, 2018.

\bibitem[DLJ10]{ding2010convex}
Chris~HQ Ding, Tao Li, and Michael~I Jordan.
\newblock Convex and semi-nonnegative matrix factorizations.
\newblock {\em IEEE transactions on pattern analysis and machine intelligence},
  32(1):45--55, 2010.

\bibitem[DLL{\etalchar{+}}18]{du2018quantum}
Yuxuan Du, Tongliang Liu, Yinan Li, Runyao Duan, and Dacheng Tao.
\newblock Quantum divide-and-conquer anchoring for separable non-negative
  matrix factorization.
\newblock In {\em Proceedings of the 27th International Joint Conference on
  Artificial Intelligence}, pages 2093--2099. AAAI Press, 2018.

\bibitem[DS04]{donoho2004does}
David Donoho and Victoria Stodden.
\newblock When does non-negative matrix factorization give a correct
  decomposition into parts?
\newblock In {\em Advances in neural information processing systems}, pages
  1141--1148, 2004.

\bibitem[EMO{\etalchar{+}}12]{esser2012convex}
Ernie Esser, Michael Moller, Stanley Osher, Guillermo Sapiro, and Jack Xin.
\newblock A convex model for nonnegative matrix factorization and
  dimensionality reduction on physical space.
\newblock {\em IEEE Transactions on Image Processing}, 21(7):3239--3252, 2012.

\bibitem[ESV12]{elhamifar2012see}
Ehsan Elhamifar, Guillermo Sapiro, and Rene Vidal.
\newblock See all by looking at a few: Sparse modeling for finding
  representative objects.
\newblock In {\em 2012 IEEE Conference on Computer Vision and Pattern
  Recognition}, pages 1600--1607. IEEE, 2012.

\bibitem[EV09]{elhamifar2009sparse}
Ehsan Elhamifar and Ren{\'e} Vidal.
\newblock Sparse subspace clustering.
\newblock In {\em 2009 IEEE Conference on Computer Vision and Pattern
  Recognition}, pages 2790--2797. IEEE, 2009.

\bibitem[FKV04]{frieze2004lowrank}
Alan Frieze, Ravi Kannan, and Santosh Vempala.
\newblock Fast monte-carlo algorithms for finding low-rank approximations.
\newblock {\em Journal of the ACM (JACM)}, 51(6):1025--1041, 2004.

\bibitem[GLT18]{gilyen2018quantum}
Andr{\'a}s Gily{\'e}n, Seth Lloyd, and Ewin Tang.
\newblock Quantum-inspired low-rank stochastic regression with logarithmic
  dependence on the dimension.
\newblock {\em arXiv preprint arXiv:1811.04909}, 2018.

\bibitem[GTLST12]{guan2012mahnmf}
Naiyang Guan, Dacheng Tao, Zhigang Luo, and John Shawe-Taylor.
\newblock Mahnmf: Manhattan non-negative matrix factorization.
\newblock {\em stat}, 1050:14, 2012.

\bibitem[GV14]{gillis2014fast}
Nicolas Gillis and Stephen~A Vavasis.
\newblock Fast and robust recursive algorithmsfor separable nonnegative matrix
  factorization.
\newblock {\em IEEE transactions on pattern analysis and machine intelligence},
  36(4):698--714, 2014.

\bibitem[HD11]{hsieh2011fast}
Cho-Jui Hsieh and Inderjit~S Dhillon.
\newblock Fast coordinate descent methods with variable selection for
  non-negative matrix factorization.
\newblock In {\em Proceedings of the 17th ACM SIGKDD international conference
  on Knowledge discovery and data mining}, pages 1064--1072. ACM, 2011.

\bibitem[Hof17]{hofmann2017probabilistic}
Thomas Hofmann.
\newblock Probabilistic latent semantic indexing.
\newblock In {\em ACM SIGIR Forum}, volume~51, pages 211--218. ACM, 2017.

\bibitem[KP08]{kim2008toward}
Jingu Kim and Haesun Park.
\newblock Toward faster nonnegative matrix factorization: A new algorithm and
  comparisons.
\newblock In {\em Data Mining, 2008. ICDM'08. Eighth IEEE International
  Conference on}, pages 353--362. IEEE, 2008.

\bibitem[KSK13]{kumar2013fast}
Abhishek Kumar, Vikas Sindhwani, and Prabhanjan Kambadur.
\newblock Fast conical hull algorithms for near-separable non-negative matrix
  factorization.
\newblock In {\em International Conference on Machine Learning}, pages
  231--239, 2013.

\bibitem[Lin07]{lin2007projected}
Chih-Jen Lin.
\newblock Projected gradient methods for nonnegative matrix factorization.
\newblock {\em Neural computation}, 19(10):2756--2779, 2007.

\bibitem[LS99]{lee1999learning}
Daniel~D Lee and H~Sebastian Seung.
\newblock Learning the parts of objects by non-negative matrix factorization.
\newblock {\em Nature}, 401(6755):788, 1999.

\bibitem[LS01]{lee2001algorithms}
Daniel~D Lee and H~Sebastian Seung.
\newblock Algorithms for non-negative matrix factorization.
\newblock In {\em Advances in neural information processing systems}, pages
  556--562, 2001.

\bibitem[MD09]{mahoney2009cur}
Michael~W Mahoney and Petros Drineas.
\newblock Cur matrix decompositions for improved data analysis.
\newblock {\em Proceedings of the National Academy of Sciences}, pages
  pnas--0803205106, 2009.

\bibitem[RRTB12]{recht2012factoring}
Ben Recht, Christopher Re, Joel Tropp, and Victor Bittorf.
\newblock Factoring nonnegative matrices with linear programs.
\newblock In {\em Advances in Neural Information Processing Systems}, pages
  1214--1222, 2012.

\bibitem[Tan18]{tang2018recommendation}
Ewin Tang.
\newblock A quantum-inspired classical algorithm for recommendation systems.
\newblock {\em arXiv preprint arXiv:1807.04271}, 2018.

\bibitem[Vav09]{vavasis2009complexity}
Stephen~A Vavasis.
\newblock On the complexity of nonnegative matrix factorization.
\newblock {\em SIAM Journal on Optimization}, 20(3):1364--1377, 2009.

\bibitem[ZBG14]{zhou2014divide}
Tianyi Zhou, Jeff~A Bilmes, and Carlos Guestrin.
\newblock Divide-and-conquer learning by anchoring a conical hull.
\newblock In {\em Advances in Neural Information Processing Systems}, pages
  1242--1250, 2014.

\bibitem[ZBT13]{zhou2013divide}
Tianyi Zhou, Wei Bian, and Dacheng Tao.
\newblock Divide-and-conquer anchoring for near-separable nonnegative matrix
  factorization and completion in high dimensions.
\newblock In {\em 2013 IEEE 13th International Conference on Data Mining},
  pages 917--926. IEEE, 2013.

\end{thebibliography}
\end{document}